\title[Dynamical systems approach for the shape matching]{A dynamical systems approach for the shape matching of polytopes along rigid-body motions}
\author[Ha]{Seung-Yeal Ha}
\address[Seung-Yeal Ha]{\newline Department of Mathematical Sciences and Research Institute of Mathematics \newline Seoul National University, Seoul 08826, and \newline
Korea Institute for Advanced Study, Hoegiro 85, 02455, Seoul, Republic of Korea}
\email{syha@snu.ac.kr}
\author[Park]{Hansol Park}
\address[Hansol Park]{\newline Department of Mathematical Sciences\newline Seoul National University, Seoul 08826, Republic of Korea}
\email{hansol960612@snu.ac.kr}
\newtheorem{theorem}{Theorem}[section]
\newtheorem{lemma}{Lemma}[section]
\newtheorem{proposition}{Proposition}[section]
\newtheorem{remark}{Remark}[section]
\newtheorem{definition}{Definition}[section]
\newcommand{\bbr}{\mathbb R}
\begin{document}

\date{\today}

\subjclass{82C10 82C22 35B37} \keywords{Aggregation, dynamical system approach, emergence, Kuramoto model, Lohe matrix model, rigid-body motion, shape matching}

\thanks{\textbf{Acknowledgment.} The work of S.-Y. Ha was supported by National Research Foundation of Korea(NRF-2020R1A2C3A01003881)}

\begin{abstract}
We present a dynamical systems approach for geometric matchings in an ensemble of polytopes along rigid-body motions. Each polytope can be characterized by a vertex set and edge or faces determined by vertices, and polygons and simplexes correspond to a polytope. For a geometric matching, we propose a system of dynamical system for the evolution of centroids and rotations of polytopes to match the vertices under rigid-body motions which can be decomposed as a composition of translation and rotations. Our proposed dynamical system acts on the product space $({\mathbb R}^d \times SO(d))^N$. The evolution of centroids can be described by the coupled linear second-order dynamical system with diffusive linear couplings, whereas rotations for the matching of vertices are described by the Lohe matrix model on $SO(d)^N$. In particular, the Lohe matrix model has been derived from some set of physical principles compared to previous works in which the Lohe matrix model were employed as a  system dynamics. This is a contrasted difference between earlier works on the Lohe matrix model which has been adopted a priori for an aggregate modeling of matrices. We also provide an analytical result leading to the complete shape matchings for an ensemble of congruent polytopes, and several numerical examples to illustrate analytical results visually. 
\end{abstract}

\maketitle \centerline{\date}


\section{Introduction} \label{sec:1}
\setcounter{equation}{0} 
Collective behaviors of complex systems often appear in nature, e.g., aggregation of bacteria \cite{ F-P-P}, synchronization of fireflies and pacemaker cells \cite{B-C-M, B-B, Pe}, flocking of birds, swarming of fish, etc. For a survey, we refer to \cite{A-B, P-R,St, Wi1}. In this paper, we are interested in the rigid-body motions of polytopes leading to shape matchings. A polytope is a geometric object consisting of vertices, lines and faces connecting them, and we will identify each polytope with the set of vertices with fixed pairwise distance between them. Consider an ensemble consisting of vertex sets moving with rigid-body motions in free Euclidean space without any obstacles. Under this circumstance, we are mainly interested in the design of a dynamical system for shape matching among point sets via rigid-body motions. In most literature on the collective behaviors, mathematical modelings were mostly done for point particles without any internal structures, e.g., the particle Keller-Segel model \cite{F-P-P, F-Z}, the Winfree model \cite{Wi2}, the Kuramoto model \cite{C-H-J-K, C-S, D-X, D-B, D-B1, H-K-R1, H-L-X,Ku1, Ku2,V-M2, V-M1}, the Lohe sphere model \cite{C-C-H, C-H5, H-K-R0,O}, the Lohe matrix model \cite{D, H-K, H-K-R2, H-R, Lo-6, Lo-1, Lo-2} and the Lohe tensor model \cite{H-P, H-P-1} etc. 

To fix the idea, we consider an ensemble of congruent $n$-polytopes with different initial positions moving with a rigid-body motion in ${\mathbb R}^d$. Let $\Lambda$ be a fixed set of vertices of congruent $n$-polytopes and we denote by $\{x_\alpha^i \}_{\alpha \in \Lambda}$ be the vertex set of the $i$-th polytope. In this situation, we would like to address the following dynamic shape matching problems: \newline
\begin{itemize}
\item
(Q1) (Design of a dynamical system):~We design a continuous-time dynamical system leading to exact shape matching of all polytopes ${\mathcal C} := \{ \{x_\alpha^i \}_{\alpha \in \Lambda}:~i = 1, \cdots, N \}$ asymptotically: 
\[ \lim_{t \to \infty} \max_{\alpha \in \Lambda} \|x^i_\alpha - x^j_\alpha\| = 0, \quad \forall~i, j = 1, \cdots, N, \]
where $\| \cdot \|$ be a $\ell^2$-norm in $\bbr^d$. 

\vspace{0.1cm}

\item
(Q2) (Validity question):~If such dynamical system exists, what are the sufficient conditions for initial configurations and system parameters leading to the exact shape matching?
\end{itemize}

\vspace{0.1cm}

For each polytope, we decompose a rigid-body motion of each polytope as a direct sum of translation motion (external motion) and rotation (internal motion) so that 
\[ \{ \mbox{ Rigid-body motions consisting of translation and rotation} \}  \cong \bbr^d \times SO(d). \]

In this paper, we study above two questions (Q1) - (Q2).  More precisely, our main result can be summarized as follows. Our first main result deals with the design of a dynamical system on the state space $(\bbr^d \times SO(d))^N$. First, we describe our governing dynamical system as follows. For the $i$-th $n$-polytope $P^i = \{ x_\alpha^i \}_{\alpha \in \Lambda}$, we define its center-of-mass and relative displacements of vertices from the center-of-mass:
\[  {\bar x}^i := \frac{ 1}{|\Lambda|} \sum_{\alpha\in\Lambda} x^i_\alpha, \qquad r^i_\alpha := x^i_\alpha - {\bar x}^i,  \quad i = 1, \cdots, N,  \]
so that 
\begin{equation} \label{A-0-0-0}
  x^i_\alpha(t) =  {\bar x}^i(t) + r^i_\alpha(t),  \quad \alpha \in \Lambda, \quad  i = 1, \cdots, N.
\end{equation} 
Our governing system consists of two subsystems for the center-of-mass and its displacements. For the motion of center-of-mass $\{ {\bar x}^i \}$, we derive a second-order linear consensus model with a constant damping and diffusive linear consensus (see Section \ref{sec:3.1}):
\begin{equation} \label{A-0-0}
 m\ddot{{\bar x}}^i = -\gamma\dot{{\bar x}}^i + \frac{\kappa}{N}\sum_{k=1}^N({\bar x}^k- {\bar x}^i),
\end{equation}
where $\gamma$ and $\kappa$ are positive friction coefficient and coupling strength, respectively. 

In contrast, the dynamics of displacement $r_\alpha^i$ from ${\bar x}^i$ will be determined by that of matrix $O^i \in SO(d)$:
\begin{equation} \label{A-0-1}
 r_\alpha^i(t) := O^i(t) r_\alpha, \quad \alpha \in \Lambda, \quad i = 1, \cdots, N,
\end{equation}
where $\{r_\alpha\}_{\alpha\in\Lambda}\subset\bbr^d$ with $\sum_{\alpha\in\Lambda}r_\alpha=0$.
In Section \ref{sec:3.2}, we will derive the Lohe matrix model on $SO(d)^N$ without imposing it based on several set of physical principles:
\begin{equation} \label{A-0-2}
m\Big(\ddot{O}^i(O^i)^T+\dot{O}^i(\dot{O}^i)^T\Big) = -\gamma\dot{O}^i(O^i)^T + \frac{\kappa}{2N}\sum_{k=1}^N\big(O^k(O^i)^T-O^i(O^k)^T\big).
\end{equation}
Finally, we combine \eqref{A-0-0}, \eqref{A-0-1} and \eqref{A-0-2} to propose a dynamical system for $\{ ({\bar x}_\alpha^i, O^i) \}$:
\begin{align}\label{A-1}
\begin{cases}
\displaystyle m\ddot{{\bar x}}^i = -\gamma\dot{{\bar x}}^i + \frac{\kappa}{N}\sum_{k=1}^N({\bar x}^k- {\bar x}^i), \quad t > 0,~~i = 1, \cdots, N, \\
\displaystyle m\Big(\ddot{O}^i(O^i)^T + \dot{O}^i(\dot{O}^i)^T\Big) = -\gamma\dot{O}^i(O^i)^T + \frac{\kappa}{2N}\sum_{k=1}^N\big(O^k(O^i)^T-O^i(O^k)^T\big).
\end{cases}
\end{align}
Once we solve the above decoupled system \eqref{A-1}, we can recover the original state $x^i_\alpha$ by the relations \eqref{A-0-0-0} and \eqref{A-0-1}:
\[    x^i_\alpha(t) =  {\bar x}^i(t) + O^i(t) r_\alpha, \quad t > 0, \quad \alpha \in \Lambda,~~i = 1, \cdots, N, \]
where $r_\alpha$ is the reference position determined by the initial position. \newline

Note that for the case with $m = 0$ and $\gamma = 1$, system \eqref{A-1} formally reduces to 
\begin{equation}\label{A-2}
\begin{cases}
\displaystyle \dot{{\bar x}}^i  = \frac{\kappa}{N}\sum_{k=1}^N({\bar x}^k- {\bar x}^i), \quad r^i_\alpha(t)=O^i(t) r_\alpha, \\
\displaystyle \dot{O}^i(O^i)^T = \frac{\kappa}{2N}\sum_{k=1}^N\big(O^k(O^i)^T-O^i(O^k)^T\big).
\end{cases}
\end{equation}
Note that the second equation $\eqref{A-2}_2$ corresponds to the Lohe matrix model with zero natural frequency matrices. Hence,  surprisingly the Lohe matrix model appears naturally in the dynamics on $SO(d)^N$. In previous studies \cite{D-F-M1, D-F-M-T, H-K-L-N} on the orientation synchronization of multi-agent systems, several suitable matrix-valued aggregation models were used without any particular justification, whereas we derive the Lohe matrix model on $SO(d)^N$ starting from some physical arguments on rotations. This is a contrasted difference between earlier works on the matrix-valued consensus model and the result obtained in the current paper. 

\vspace{0.5cm}

The rest of this paper is organized as follows. In Section \ref{sec:2}, we briefly review emergent dynamics of the second-order and first-order consensus models on $(\bbr^d)^N$ and $SO(d)^N$ with linear diffusive couplings and linear damping force. In Section \ref{sec:3}, we provide a derivation of dynamical systems for the displacement around center-of-mass beginning from Newton's second law. In Section \ref{sec:4}, we briefly discuss emergence of complete shape matching for systems \eqref{A-1} and \eqref{A-2} as a direct corollary of earlier results on the linear consensus model and the Lohe matrix model for identical polytopes. For similar and heterogeneous collections of polytopes, we also provide variant systems for shape matchings. In Section \ref{sec:5}, we provide several numerical simulations and compare them with our analytical results in Section \ref{sec:4}. Finally Section \ref{sec:6} is devoted to a brief summary of our main results and some discussion on a future work. 

\section{Preliminaries} \label{sec:2}
\setcounter{equation}{0}
In this section, we briefly discuss some first-order and second-order consensus models on three types of finite-dimensional manifolds ${\mathcal M}$:
\[ {\mathcal M}~:~\bbr^d, \quad SO(d), \quad {\mathcal M}_1 \times {\mathcal M}_2, \]
where ${\mathcal M}_i,~i= 1,2$ is a Riemannian manifold.

\subsection{A second-order linear consensus model} \label{sec:2.1} Let $q_i = q_i(t) \in \bbr^d$ be a quantifiable state of the $i$-th agent that we look for consensus. We assume that the dynamics of $q_i$ is governed by the Cauchy problem of the second-order linear consensus model:
\begin{equation}
\begin{cases} \label{B-1}
\displaystyle m {\ddot q}^i = -\gamma \dot{q}^i +\frac{\kappa}{N}\sum_{k=1}^N(q^k-q^i), \quad t > 0, \\
\displaystyle q^i(0) = q^{i0},  \quad i = 1, \cdots, N,
\end{cases}
\end{equation}
where $m$, $\gamma$, and $\kappa$ are nonnegative constants representing strengths of inertia, friction and coupling strength, respectively. Since system \eqref{B-1} is a linear system with constant coefficients, the explicit solution to \eqref{B-1} can be found. For the consensus dynamics of \eqref{B-1}, we introduce a configuration diameter ${\mathcal D}({\mathcal Q})$:
\begin{equation}  \label{B-2}
{\mathcal Q} := \{q^i \}_{i=1}^{N}, \quad {\mathcal D}({\mathcal Q}) := \max_{1 \leq i,j \leq N} \| q^i - q^j \|. 
\end{equation}
Note that 
\[ \mbox{asymptotic complete consensus occurs} \quad \Longleftrightarrow \quad \lim_{t \to \infty} {\mathcal D}({\mathcal Q}(t)) = 0. \]
Then, emergent dynamics of \eqref{B-1} in the following proposition.
\begin{proposition} \label{P2.1} 
Let ${\mathcal Q} = \{ q^i \}$ be a solution to the Cauchy problem \eqref{B-1}. Then, the following assertions hold.
\begin{enumerate}
\item
(Positive inertia): Suppose system parameters satisfy
\[ m > 0, \quad \gamma > 0, \quad \kappa > 0. \]
Then, ${\mathcal D}({\mathcal Q})$ satisfies 
\[
\mathcal{D}({\mathcal Q}) \lesssim
\begin{cases}
\exp\left(\displaystyle\frac{-\gamma+\sqrt{\gamma^2-4m\kappa}}{2m}~ t\right), \quad &\gamma^2-4m\kappa\geq0, \\
\exp\left(-\displaystyle\frac{\gamma}{2m}~t\right), \quad &\gamma^2-4m\kappa<0.
\end{cases}
\]

\vspace{0.2cm}

\item
(Zero inertia): Suppose system parameters satisfy
\[ m = 0, \quad \gamma = 1, \quad \kappa > 0. \]
Then, ${\mathcal D}({\mathcal Q})$ satisfies 
\[ {\mathcal D}({\mathcal Q}) \lesssim \exp\left(-\frac{\kappa}{\gamma} t\right). \]
\end{enumerate}
\end{proposition}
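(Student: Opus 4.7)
The key observation I would exploit is that system \eqref{B-1} is linear with constant coefficients and is \emph{diagonal with respect to pairwise differences}. Setting $p^{ij} := q^i - q^j$ and subtracting the $i$-th and $j$-th equations, the coupling terms telescope: for every $k$, the contribution $(q^k - q^i) - (q^k - q^j)$ equals $-(q^i - q^j)$, so the sum over $k$ gives $-\kappa p^{ij}$. Hence each pairwise difference satisfies the damped harmonic oscillator
\begin{equation*}
m \ddot{p}^{ij} + \gamma \dot{p}^{ij} + \kappa p^{ij} = 0, \qquad i,j = 1,\dots,N.
\end{equation*}
This decoupling is what makes the problem tractable; the definition \eqref{B-2} of $\mathcal{D}(\mathcal{Q})$ then gives $\mathcal{D}(\mathcal{Q}(t)) = \max_{i,j} \|p^{ij}(t)\|$, so any uniform decay estimate for the scalar oscillator above transfers directly to $\mathcal{D}(\mathcal{Q})$.

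For part (1), I would analyze the characteristic polynomial $m\lambda^2 + \gamma\lambda + \kappa = 0$, whose roots are
\begin{equation*}
\lambda_{\pm} = \frac{-\gamma \pm \sqrt{\gamma^2 - 4m\kappa}}{2m}.
\end{equation*}
In the overdamped/critically damped regime $\gamma^2 - 4m\kappa \geq 0$, both roots are real and nonpositive, and the slower decay rate is $\lambda_+$; a solution component of each $p^{ij}$ has the form $A e^{\lambda_+ t} + B e^{\lambda_- t}$, and since $\lambda_+ \geq \lambda_-$, the envelope is $\lesssim e^{\lambda_+ t}$ with the implicit constant controlled by $\|p^{ij}(0)\|$ and $\|\dot p^{ij}(0)\|$ (and, when $\lambda_+ = \lambda_-$, by a standard polynomial correction that is harmlessly absorbed into the $\lesssim$). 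In the underdamped regime $\gamma^2 - 4m\kappa < 0$, the roots are complex with real part $-\gamma/(2m)$, so every component of $p^{ij}$ is a bounded oscillation times $e^{-\gamma t/(2m)}$, giving the claimed bound. Taking the maximum over $(i,j)$ yields the announced estimate for $\mathcal{D}(\mathcal{Q})$.

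For part (2), with $m=0$ and $\gamma=1$ the second-order oscillator degenerates to the first-order linear equation $\dot p^{ij} = -\kappa p^{ij}$, giving $p^{ij}(t) = e^{-\kappa t} p^{ij}(0)$ and therefore $\mathcal{D}(\mathcal{Q}(t)) \leq e^{-\kappa t}\, \mathcal{D}(\mathcal{Q}(0))$, which matches the stated rate $\kappa/\gamma$ since $\gamma=1$.

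The proof is essentially routine linear ODE analysis once the pairwise-difference reduction is in place; the only mild obstacle is presentational, namely the critically damped case $\gamma^2 = 4m\kappa$ where a $t\, e^{\lambda_+ t}$ term appears. This is harmless because for any $\varepsilon > 0$ one has $t\, e^{\lambda_+ t} \lesssim e^{(\lambda_+ + \varepsilon) t}$, and since the proposition is stated with an unspecified implicit constant ($\lesssim$), it suffices to absorb this factor into the prefactor over any fixed time horizon, or to sharpen the statement by noting that it holds with a polynomial prefactor. I would mention this explicitly in the write-up rather than glossing over it.
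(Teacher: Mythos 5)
Your proposal follows essentially the same route as the paper: pass to the pairwise differences $q^{ij} := q^i - q^j$, which satisfy the decoupled damped oscillator $m\ddot q^{ij} + \gamma \dot q^{ij} + \kappa q^{ij} = 0$, and read the decay rates off the characteristic roots (with the $m=0$ case degenerating to $\dot q^{ij} = -(\kappa/\gamma) q^{ij}$). You are in fact slightly more careful than the paper in the critically damped case $\gamma^2 = 4m\kappa$, where the expression $c_1 e^{\lambda_+ t} + c_2 e^{\lambda_- t}$ is not the general solution and the extra factor of $t$ must be absorbed exactly as you describe.
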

\begin{proof}
\noindent (i)~We set the transversal displacement $x_{ij}$  as follows.
\[ q^{ij} :=q^i-q^j, \quad \forall~ i, j=1, 2, \cdots, N. \]
By the refining relation \eqref{B-2}, it suffices to check that $q^{ij}$ tends to zero exponentially fast. Then, the transversal difference $q^{ij}$ satisfies
\[
m\ddot{q}^{ij}+\gamma\dot{q}^{ij}+\kappa q^{ij}=0.
\]
By direct calculation, one has 
\[
q^{ij}(t)=c_1\exp\left(\frac{-\gamma+\sqrt{\gamma^2-4m\kappa}}{2m}~t\right)+c_2\exp\left(\frac{-\gamma-\sqrt{\gamma^2-4m\kappa}}{2m}~t\right),
\]
where $c_1$ and $c_2$ are constants determined by initial data. \newline

\noindent $\bullet$~Case A $( \gamma^2-4m\kappa\geq0)$:~In this case, since
\[
\frac{-\gamma-\sqrt{\gamma^2-4m\kappa}}{2m}\leq \frac{-\gamma+\sqrt{\gamma^2-4m\kappa}}{2m}<0,
\]
one has 
\[
|q^{ij}(t)| \lesssim \exp\left(\displaystyle\frac{-\gamma+\sqrt{\gamma^2-4m\kappa}}{2m}~t\right), \quad t \geq 0.
\]

\vspace{0.5cm}

\noindent $\bullet$~Case B $( \gamma^2-4m\kappa < 0)$:~In this case, one has 
\[
\mathrm{Re}\left(\frac{-\gamma-\sqrt{\gamma^2-4m\kappa}}{2m}\right)=\mathrm{Re}\left(\frac{-\gamma+\sqrt{\gamma^2-4m\kappa}}{2m}\right)=-\frac{\gamma}{2m}<0.
\]
Thus we have
\[
|q^{ij}(t) | \lesssim \exp\left(-\displaystyle\frac{\gamma}{2m}~t\right), \quad t \geq 0.
\]

\vspace{0.5cm}

\noindent (ii)~Consider the case:
\[ m = 0, \quad \gamma > 0. \]
In this case, $q^{ij}$ implies
\[
\dot{q}^{ij} =-\frac{\kappa}{\gamma} q^{ij}, \quad \mbox{i.e.,} \quad q^{ij}(t)=e^{-\frac{\kappa}{\gamma} t} q^{ij}(0).
\]
\end{proof}

\subsection{The second-order Lohe matrix model on $SO(d)$} \label{sec:2.2}
Let $A^i \in SO(d)$ be an orthogonal matrix whose dynamics is governed by the Cauchy problem to the second-order matrix model:
\begin{equation} \label{B-3}
\begin{cases}
\displaystyle m \Big[ \ddot{A}^i(A^{i})^{T} + \dot{A}^i(\dot{A}^i)^{T} \Big ] = -\gamma \dot{A}^i (A^i)^{T}  +  \frac{\kappa}{2N}\sum_{j=1}^{N}\left[A^j (A^i)^{T} - A^i (A^j)^{T}\right], \quad t > 0, \\
(A^i, {\dot A}^i)(0)= (A^{i0}, {\dot A}^{i0}), \quad i = 1, \cdots, N,
\end{cases}
\end{equation}
subject to constraints:
\begin{equation} \label{B-4}
A^{i0} \in SO(d), \quad \dot{A}^{i0} (A^{i0})^{T}+ A^{i0} (\dot{A}^{i0})^{T} =O_d, \quad i = 1, \cdots, N,
\end{equation}
where $O_d$ is the $d\times d$ zero matrix.  The matrix model \eqref{B-3} was first introduced in \cite{H-K}, and its emergent dynamics was also extensively studied there.

\begin{lemma} \label{L2.1}
Let ${\mathcal A} = \{ A^i \}$ be a solution to \eqref{B-3} - \eqref{B-4}. Then, one has 
\[  A^i(t) (A^i(t))^T = I_d, \quad t \geq 0,\quad i=1,\cdots,N, \]
where $I_d$ is the $d \times d$ identity matrix. 
\end{lemma}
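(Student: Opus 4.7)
The plan is to collapse the second-order orthogonality constraint into a first-order linear ODE whose only solution is the zero matrix. The key observation is that setting $S^i := \dot{A}^i(A^i)^T$ gives
\[ \dot{S}^i = \ddot{A}^i(A^i)^T + \dot{A}^i(\dot{A}^i)^T, \]
which is exactly the bracket on the left-hand side of \eqref{B-3}. So the second-order Lohe matrix equation is equivalent to the first-order system
\[ m\dot{S}^i = -\gamma S^i + \frac{\kappa}{2N}\sum_{j=1}^N \bigl[A^j(A^i)^T - A^i(A^j)^T\bigr]. \]

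Next I would pass to the symmetric part $W^i := S^i + (S^i)^T$, noting that
\[ W^i = \dot{A}^i(A^i)^T + A^i(\dot{A}^i)^T = \frac{d}{dt}\bigl[A^i(A^i)^T\bigr], \]
so proving $W^i \equiv O_d$ is the whole game. The coupling kernel $A^j(A^i)^T - A^i(A^j)^T$ is manifestly skew-symmetric, hence adding the ODE to its transpose annihilates the coupling completely and leaves the closed linear equation
\[ m\dot{W}^i = -\gamma W^i. \]
The compatibility condition \eqref{B-4} reads precisely $W^i(0) = O_d$, so uniqueness for this decoupled linear ODE forces $W^i(t) \equiv O_d$ on $[0,\infty)$.

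Finally, combining $\frac{d}{dt}[A^i(A^i)^T] \equiv O_d$ with the initial value $A^{i0}(A^{i0})^T = I_d$ (since $A^{i0} \in SO(d)$) yields $A^i(t)(A^i(t))^T = I_d$ for all $t \geq 0$, which is the desired identity. The degenerate case $m = 0$ is even easier: the symmetric-part manipulation reduces to the pointwise algebraic identity $-\gamma W^i = O_d$, from which $W^i \equiv O_d$ follows at once. I do not anticipate a real obstacle in this argument; the only thing to appreciate is that the LHS of \eqref{B-3} was engineered so as to equal $m\frac{d}{dt}[\dot{A}^i(A^i)^T]$, turning a second-order constraint-propagation problem into a first-order dissipative ODE for the symmetric defect $W^i$, and the skew-symmetry of the coupling kernel is immediate upon transposition.
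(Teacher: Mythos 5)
Your argument is correct and is essentially the paper's own proof: the paper sets $M^i := A^i(A^i)^T$, adds equation \eqref{B-5} to its transpose to annihilate the skew-symmetric coupling, obtains $m\ddot{M}^i = -\gamma\dot{M}^i$ (which is exactly your $m\dot{W}^i=-\gamma W^i$ with $W^i=\dot{M}^i$), and concludes from $\dot{M}^i(0)=O_d$ that $M^i\equiv I_d$. Your reformulation via $S^i=\dot{A}^i(A^i)^T$ and direct uniqueness for the first-order ODE is only a cosmetic streamlining of the same symmetrization idea.
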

\begin{proof} We first rewrite $\eqref{B-3}_1$ as 
\begin{equation} \label{B-5}
m\ddot A^i (A^i)^T  = - m {\dot A}^i ({\dot A}^i)^T  - \gamma {\dot A}^i (A^i)^T  +  \frac{\kappa}{2N} \sum_{k=1}^N \Big( A^k (A^i)^T - A^i (A^k)^T \Big).
\end{equation}
For the desired estimate, we set 
\begin{equation} \label{B-5-0}
M^i := A^i (A^i)^{T}, \quad i=1,\cdots,N.
\end{equation}
Then, by the assumptions \eqref{B-4} on initial data, we have
\begin{equation} \label{B-5-1}
M^i(0) = I_d, \quad {\dot M}^i(0)= O_d, \quad i = 1, \cdots, N. 
\end{equation}
For the desired estimate, we first claim:
\begin{equation*} \label{B-5-2}
{\dot M}^i(t) = 0, \quad t > 0, \quad i = 1, \cdots, N. 
\end{equation*}

\vspace{0.3cm}

\noindent $\bullet$ Step A (Derivation of a dynamics for $M_i$): We differentiate the relation \eqref{B-5-0} twice with respect to $t$ to obtain 
\begin{equation} \label{B-5-3}
\dot M^i  = {\dot A}^i (A^i)^T + A^i ({\dot A}^i)^T, \qquad {\ddot M}^i = {\ddot A}^i (A^i)^T + A^i ({\ddot A}^i)^T + 2 {\dot A}^i ({\dot A}^i)^T.
\end{equation}
We take a transpose of \eqref{B-5} to obtain
\begin{equation} \label{B-6}
m A^i ({\ddot A}^i)^T = - m {\dot A}^i ({\dot A}^i)^T - \gamma A^i ({\dot A}^i)^T + \frac{\kappa}{2N} \sum_{k=1}^N \Big (A^i (A^k)^T - A^k (A^i)^T \Big).
\end{equation}
We add \eqref{B-5} and \eqref{B-6} to see 
\begin{equation} \label{B-6-1}
m \Big( {\ddot A}^i  (A^i)^T + A^i ({\ddot A}^i)^T  \Big) = -2m {\dot A}^i ({\dot A}^i)^T  -\gamma \Big({\dot A}^i (A^i)^T + A^i ({\dot A}^i)^T \Big ).
\end{equation}
Then, we use \eqref{B-5-3} and \eqref{B-6-1} to see that $M^i$ satisfies
\begin{equation} \label{B-7}
m {\ddot M}^i = -\gamma {\dot M}^i.
\end{equation}

\vspace{0.5cm}

\noindent $\bullet$ Step B ($M_i$ is a constant of motion): We use \eqref{B-7} and $\| A \|_F^2 = \mbox{Tr}(A A^T)$ to see
\begin{align*}
\begin{aligned}
m\frac{d}{dt} \|\dot M^i\|_\text{F}^2 &= m\frac{d}{dt}\, \text{Tr}(\dot M^i ({\dot M}^i)^T) = m\,\text{Tr} \Big( \ddot M^i ({\dot M}^i)^T + {\dot M}^i ({\ddot M}^i)^T \Big) \\
&=\mbox{Tr}\Big(-\gamma {\dot M}^i ({\dot M}^i)^T -\gamma {\dot M}^i ({\dot M}^i)^T \Big) =-2\gamma \mbox{Tr}\Big( {\dot M}^i ({\dot M}^i)^T \Big) =-2\gamma \|\dot M^
i\|_\text{F}^2.
\end{aligned}
\end{align*}
This yields
\begin{equation} \label{B-8}
\|\dot M^i(t)\|_\text{F}^2 = \|\dot M^i(0)\|_\text{F}^2e^{-\frac{2\gamma}{m}t},\quad t>0.
\end{equation}
Next, we use \eqref{B-5-1} and \eqref{B-8} to get 
\begin{equation*} \label{B-8-1}
{\dot M}^i(t) = O_d, \quad t > 0.
\end{equation*}
Again, this and \eqref{B-5-1} imply the desired estimate:
\[
M^i(t)  = I_d, \quad t \geq 0.
\]
\end{proof}
\begin{proposition} \label{P2.2}
\emph{\cite{H-K, H-K-R2}}
Let ${\mathcal A} = \{ A^i \}$ be a solution to \eqref{B-3} - \eqref{B-4}. Then the following assertions hold.
\begin{enumerate}
\item
Suppose system parameters satisfy
\[ m > 0, \quad \gamma > 0, \quad \kappa > 0. \]
Then, ${\mathcal D}({\mathcal A})$ satisfies 
\[
\lim_{t\to\infty} {\mathcal D}({\mathcal A}(t))=0.
\]
\item
Suppose system parameters and initial data satisfy 
\[ m = 0, \quad \gamma > 0, \quad \kappa > 0, \quad  \max_{i,j} \|A^{i0} - A^{j0} \|_{F} < 1. \]
Then, ${\mathcal D}({\mathcal A})$ satisfies 
\[ {\mathcal D}({\mathcal A}(t)) \lesssim  e^{-\frac{\kappa}{\gamma} t}. \]
\end{enumerate}
\end{proposition}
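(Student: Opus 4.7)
The plan is to handle the two regimes separately, exploiting in both cases the first integral from Lemma 2.1 that the trajectory stays on $SO(d)^N$.

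For part (2) with the inertia-free regime $m=0$, the system reduces to the standard first-order Lohe matrix model
\[
\gamma\dot A^i(A^i)^T = \frac{\kappa}{2N}\sum_{k=1}^N\bigl(A^k(A^i)^T - A^i(A^k)^T\bigr).
\]
I would work with the pairwise correlation $h_{ij}(t) := \mathrm{Tr}(A^i(A^j)^T)$, which thanks to the $SO(d)$ constraint satisfies the algebraic identity $\|A^i-A^j\|_F^2 = 2d - 2h_{ij}$, so asymptotic matching is equivalent to $h_{ij}\to d$. Differentiating $h_{ij}$ and inserting the dynamics produces, after standard trace manipulations, a leading averaged drift of the form $(\kappa/\gamma N)\sum_k(h_{ik}+h_{jk} - 2 h_{ij})$ plus quadratic cross terms controlled by the configuration diameter. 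The hypothesis $\max_{i,j}\|A^{i0}-A^{j0}\|_F<1$ is equivalent to $d - h_{ij}(0) < 1/2$, and a first step is to propagate this smallness for all $t\geq 0$ via a closed-tube argument so that the nonlinear terms remain subordinate to the linear dissipation. On this invariant tube one obtains a Gronwall-type inequality for $\max_{i,j}(d-h_{ij})$ with linear rate proportional to $\kappa/\gamma$, which translated back to Frobenius distance gives the claimed decay $\mathcal{D}(\mathcal{A}(t))\lesssim e^{-(\kappa/\gamma)t}$.

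For part (1) with the inertial case $m,\gamma,\kappa>0$, I would use an energy / LaSalle argument. Define the angular-velocity matrices $\Omega^i:=\dot A^i(A^i)^T$, which by differentiating $A^i(A^i)^T=I_d$ from Lemma 2.1 are skew-symmetric, and rewrite (2.3) as
\[
m\dot\Omega^i = -\gamma\Omega^i + \frac{\kappa}{2N}\sum_{k=1}^N\bigl(A^k(A^i)^T - A^i(A^k)^T\bigr).
\]
I would then propose the Lyapunov functional
\[
E(t) := \frac{m}{2}\sum_{i=1}^N\|\dot A^i\|_F^2 + \frac{\kappa}{4N}\sum_{i,k=1}^N\|A^i-A^k\|_F^2,
\]
the natural sum of a kinetic term and a configurational potential matched to the diffusive coupling. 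A direct computation using the equation of motion and the cyclic property of the trace should yield the dissipation identity $\dot E = -\gamma\sum_i\|\Omega^i\|_F^2\leq 0$. Since $SO(d)^N$ is compact and $\|\dot A^i\|_F$ stays bounded by $E$, the trajectory is precompact and LaSalle's invariance principle forces the $\omega$-limit set to lie in $\{\Omega^i\equiv 0\}$. Substituting $\dot A^i\equiv 0$ into (2.3) gives the algebraic equilibrium system $\sum_k(A^k(A^i)^T - A^i(A^k)^T)=O_d$, and the final step is to identify the attractive connected component with the consensus manifold, so $\mathcal{D}(\mathcal{A}(t))\to 0$.

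The hardest step is the last one in part (1): this algebraic equilibrium system admits nontrivial solutions (for instance, bipolar or antipodal configurations) that are not consensus states, so ruling them out requires an instability analysis showing such non-consensus equilibria are hyperbolic saddles avoided by generic trajectories, or an alternative basin-of-attraction bound appealing to the dissipation rate. The secondary technical point, in part (2), is the careful bookkeeping of quadratic cross terms needed to recover exactly the rate $\kappa/\gamma$; this is precisely where the smallness hypothesis $\|A^{i0}-A^{j0}\|_F<1$ enters.
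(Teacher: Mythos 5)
Your part (2) is sound in outline and is a mild variant of what the paper does: the paper (following Lemma 3.1 of \cite{H-K-R2}) works directly with the diameter ${\mathcal D}({\mathcal A})=\max_{i,j}\|A^i-A^j\|_F$ and derives the Riccati-type inequality $\frac{d}{dt}{\mathcal D}({\mathcal A})\le -\kappa\,{\mathcal D}({\mathcal A})\,(1-{\mathcal D}({\mathcal A}))$, whose explicit integration simultaneously yields the invariance of the region $\{{\mathcal D}<1\}$ and the exponential rate. Your route through the correlations $h_{ij}=\mathrm{Tr}(A^i(A^j)^T)$ is equivalent via $\|A^i-A^j\|_F^2=2d-2h_{ij}$; the ``closed tube'' you propose to propagate by hand is exactly what the logistic inequality delivers for free, so the diameter formulation is the cleaner bookkeeping device, but nothing in your plan would fail.

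The genuine gap is in part (1), and it is precisely the step you flag at the end. Your energy functional and the dissipation identity $\dot E=-\gamma\sum_i\|\Omega^i\|_F^2$ coincide, up to normalization, with the paper's ${\mathcal E}$ and its identity $\frac{2\gamma}{N}\int_0^t\sum_i\|\dot A^i(s)\|_F^2\,ds={\mathcal E}(0)-{\mathcal E}(t)$, and LaSalle versus Barbalat is an immaterial difference: both arguments only place the $\omega$-limit set inside $\{\dot A^i\equiv 0\}$, i.e., inside the equilibrium set of the coupling, which contains non-consensus (e.g., bipolar) configurations. Moreover, an instability argument showing such equilibria are ``avoided by generic trajectories'' would prove strictly less than the proposition asserts, since the statement claims ${\mathcal D}({\mathcal A}(t))\to 0$ for every solution of \eqref{B-3}--\eqref{B-4} with $m,\gamma,\kappa>0$, not for generic initial data. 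To be fair, the paper's own in-text sketch jumps over the same point (Barbalat gives only $\|\dot A^i\|_F\to 0$) and defers the identification of the attractor to \cite{H-K}, where additional structure beyond the energy estimate is used. So your proposal reproduces the paper's strategy faithfully, but as a self-contained proof part (1) remains open at exactly the place where the paper leans on the cited reference.
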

\begin{proof}
The detailed proofs can be found in \cite{H-K} and \cite{H-K-R2}, respectively. However, for reader's convenience, we briefly sketch main ideas to get some feeling how it goes. \newline

\noindent (i) We define some energy functionals as follows:
\[
{\mathcal E}(t) =\mathcal{E}[{\mathcal A}(t), {\dot {\mathcal A}}(t)] :=\frac{m}{N}\sum_{i=1}^N\|\dot{A}^i\|^2_F+\frac{\kappa}{2N^2}\sum_{i, k=1}^N\|A^i-A^k\|_F^2.
\]
Then, ${\mathcal E}$ is nonnegative, and by direct calculation (Proposition 3.1 \cite{H-K}), one has 
\[
\frac{2\gamma}{N}\int_0^t \sum_{i=1}^N \|\dot U_i(s)\|_\text{F}^2 \,ds = \mathcal E(0) - \mathcal E(t) \leq \mathcal E(0), \quad t>0.
\]
This implies 
\[  \int_0^{\infty} \|\dot U_i(s)\|_\text{F}^2 \,ds < \infty. \]
Moreover, we can check that  $\Big| \frac{d}{dt} \| {\dot U}_i \|_{F} \Big|$ is uniformly bounded. Hence $ \| {\dot U}_i \|_{F}$ is Lipschitz continuous which clearly implies the uniform continuity of $\|\dot U_i\|_\text{F}$. Then, we can apply Barbalat's lemma to derive the desired zero convergence of ${\mathcal D}({\mathcal A})$ without any detailed decay rate. 

\vspace{0.2cm}

\noindent (ii) By detailed delicate estimate  (Lemma 3.1 \cite{H-K-R2}), the ensemble diameter ${\mathcal D}({\mathcal A})$ satisfies
\[ \frac{d}{dt} {\mathcal D}({\mathcal A})\leq -\kappa {\mathcal D}({\mathcal A}) (1 - {\mathcal D}({\mathcal A})), \quad \mbox{a.e.,}~t > 0. \]
Then, we integrate the above differential inequality to get 
\[ {\mathcal D}({\mathcal A}(t)) \leq \frac{{\mathcal D}({\mathcal A}^0)}{(1-{\mathcal D}({\mathcal A}^0))e^{\kappa t}+ {\mathcal D}({\mathcal A}^0)}, \quad t \geq 0.      \]
As long as ${\mathcal D}({\mathcal A}^0) < 1$, we have an exponential decay estimate of ${\mathcal D}({\mathcal A})$.

\end{proof}

\subsection{A consensus model on a product manifold} \label{sec:2.3}
Let $({\mathcal M}_1, g_1)$ and $({\mathcal M}_2, g_2)$ be two Riemannian manifolds respectively. Then the product $ {\mathcal M}_1\times {\mathcal M}_2$ defined by
\[ {\mathcal M}_1\times {\mathcal M}_2  = \{(x, y):~x\in {\mathcal M}_1,~~y\in {\mathcal M}_2\} \]
is a product Riemannian manifold with a product metric $g_1\oplus g_2$. 
\begin{definition}
\emph{\cite{F-P-P}}
 Let ${\mathcal X}(t)=\{x^i(t)\}_{i=1}^N\subset {\mathcal M}_1$ and ${\mathcal Y}(t)=\{y^i(t)\}_{i=1}^N$ be consensus flows on $({\mathcal M}_1, g_1)$ and $({\mathcal M}_2, g_2)$, respectively. Then the product flow ${\mathcal Z}(t)=\{z_i(t)=(x_i(t), y_i(t))\}_{i=1}^N\subset {\mathcal M}_1\times {\mathcal M}_2$ is also a consensus flow on ${\mathcal M}_1\times {\mathcal M}_2$. 
\end{definition}

\begin{proposition}
\emph{\cite{F-P-P}} 
Let $\{x_i\}_{i=1}^N$ and $\{y_i\}_{i=1}^N$ be consensus flows on $({\mathcal M}_1, g_1)$ and $({\mathcal M}_2, g_2)$ satisfying
\[
d_1(x_i(t), x_j(t))\to0\quad\mbox{and}\quad d_2(y_i(t), y_j(t))\to0\quad\mbox{as}\quad t\to\infty, \quad i, j \in \{1, \cdots, N \}. 
\]
Then, the product flow $\{z_i :=(x_i, y_i)\}_{i=1}^N$ on ${\mathcal M}_1\times {\mathcal M}_2$ satisfies 
\[
\lim_{t \to \infty} (d_1\oplus d_2)(z_i(t), z_j(t)) = 0, \quad \forall~ i, j  = 1, \cdots, N,
\]
i.e., asymptotic consensus emerges.
\end{proposition}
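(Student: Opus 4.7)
The plan is to reduce the claim to a general fact about distances on product Riemannian manifolds, namely that the product distance $d_1 \oplus d_2$ induced by the product metric $g_1 \oplus g_2$ is controlled by the individual factor distances. Concretely, I would first recall the definition of $d_1 \oplus d_2$ in terms of infima over piecewise smooth paths $\gamma(s) = (\gamma_1(s), \gamma_2(s))$ in ${\mathcal M}_1 \times {\mathcal M}_2$, whose tangent vector has squared norm $|\dot\gamma_1|_{g_1}^2 + |\dot\gamma_2|_{g_2}^2$ under $g_1 \oplus g_2$.

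Next, for any two points $z = (x,y)$ and $z' = (x',y')$ in ${\mathcal M}_1 \times {\mathcal M}_2$, I would produce an explicit path by concatenating (or, more cleanly, simultaneously traversing) a minimizing geodesic $\gamma_1$ from $x$ to $x'$ in ${\mathcal M}_1$ and a minimizing geodesic $\gamma_2$ from $y$ to $y'$ in ${\mathcal M}_2$, parametrized so that both traverse the unit interval proportionally to their lengths. A direct length computation (or the Cauchy--Schwarz/AM--QM inequality applied to the integrand $\sqrt{|\dot\gamma_1|_{g_1}^2 + |\dot\gamma_2|_{g_2}^2}$) then yields the bound
\[
(d_1 \oplus d_2)(z, z') \;\leq\; \sqrt{d_1(x,x')^2 + d_2(y,y')^2} \;\leq\; d_1(x,x') + d_2(y,y').
\]
Applying this to $z_i(t) = (x_i(t), y_i(t))$ and $z_j(t) = (x_j(t), y_j(t))$ gives
\[
(d_1 \oplus d_2)(z_i(t), z_j(t)) \;\leq\; d_1(x_i(t), x_j(t)) + d_2(y_i(t), y_j(t)).
\]

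Finally, I would invoke the hypothesis that both terms on the right-hand side tend to zero as $t \to \infty$ for every pair $(i,j)$, and conclude that the left-hand side does as well, which is exactly the asymptotic consensus statement on the product manifold.

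The only nontrivial step is the bound $(d_1 \oplus d_2)(z, z') \leq d_1(x,x') + d_2(y,y')$, which is a standard but genuine fact about product Riemannian metrics; everything else is a formal limit. In a fully written-out proof one would either cite this bound from a standard reference on Riemannian geometry or sketch the concatenated-geodesic argument above. No delicate estimates tied to the specific dynamics of \eqref{A-1} are required, since the hypothesis already encodes the dynamical consensus on each factor.
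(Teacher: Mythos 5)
Your argument is correct. Note, however, that the paper does not actually prove this proposition --- it is quoted verbatim from the cited reference \cite{F-P-P} with no proof supplied --- so there is no in-paper argument to compare against. Your route, establishing the product-metric bound $(d_1\oplus d_2)(z,z') \leq \sqrt{d_1(x,x')^2 + d_2(y,y')^2} \leq d_1(x,x') + d_2(y,y')$ via simultaneously traversed (nearly) minimizing paths and then passing to the limit, is the standard and complete way to prove it; the only care needed is that without a completeness assumption one should use $\varepsilon$-almost-minimizing paths in each factor rather than genuine minimizing geodesics, which you implicitly acknowledge and which does not affect the bound.
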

\begin{remark}
As aforementioned in Introduction, our governing system \eqref{A-1} corresponds to the consensus model on the product space ${\mathbb R}^{d N} \times SO(d)^N$. Therefore to derive a consensus estimate, it suffices to verify that each subsystem exhibits consensus estimate.
\end{remark}

\section{A consensus algorithm for congruent polytopes} \label{sec:3}
\setcounter{equation}{0}
In this section, we provide a heuristic derivation of the second subsystem of the consensus model \eqref{A-1} for rotations. Although $n$-polytope is completely characterized by the vertices, edges and faces, since we are considering the same shape of polytopes, we simply identify an $n$-polytope by their vertex set consisting of $n$ points in $\bbr^d$. \newline

Consider rigid-body motions of $N$ ensemble of congruent $n$-polytopes or vertex sets with size $n$ in $\bbr^d$. For this, we decompose a rigid body motion as a translation component and a rotation  component which can be characterized by a vector in $\bbr^d$ and a rotation matrix in $SO(d)$, respectively, i.e., set of all rigid-body motions consisting of translations and rotations is isomorphic to $\bbr^d \oplus SO(d)$. To fix the idea, let $\Lambda$ be a fixed index set with $|\Lambda| = n$ and let $X^i := \{x^i_\alpha \}_{\alpha \in \Lambda}$ be the vertex set of the $i$-th set. 

Note that if $X^i$ undergoes a rigid body motion, then the relative distances between vertices contained in the same polytope remain to be constant over time:
\[
\|x^i_\alpha(t)-x^i_\beta(t)\|= \|x^i_\alpha(0)-x^i_\beta(0)\|, \quad \forall~ t > 0, \quad \forall~\alpha, \beta\in\Lambda. 
\]
The motions of center-of-mass points and displacements from the center-of-mass will be taken care by translation part and rotation part, respectively. For a set $X^i = \{x^i_\alpha\}_{\alpha \in \Lambda}$, we set
\begin{equation} \label{C-0}
 {\bar x}^i(t)=\frac{1}{n}\sum_{\alpha\in\Lambda}x^i_\alpha(t), \qquad r_\alpha^i(t) := x_\alpha^i(t) - {\bar x}^i(t), \quad t \geq 0, \quad i = 1, \cdots, N, \quad \alpha \in \Lambda
 \end{equation} 
 Then, it follows from the rigid-body motion that for each $i = 1, \cdots, N$, there exits $O^i(t)\in\mathrm{SO}(d)$ with
\[
r^i_\alpha(t)=O^i(t) r_\alpha, \quad t \geq 0,
\]
where $r_\alpha$ is the reference configuration of congruent polytopes.  To sum up, one has 
\[  x_\alpha^i={\bar x}^i+O^ir_\alpha, \quad \forall~\alpha\in\Lambda,\quad i\in\{1, 2, \cdots, N\}. \]
In the sequel, we hueristically derive a coupled system for $({\bar x}^i, O^i)$:
\begin{align*}\label{C-1}
\begin{cases}
\displaystyle m\ddot{{\bar x}}^i+\gamma\dot{{\bar x}}^i-\displaystyle\frac{\kappa}{N}\sum_{k=1}^N({\bar x}^k- {\bar x}^i)=0, \quad r^i_\alpha(t)=O^i(t) r_\alpha, \\
\displaystyle m\Big(\ddot{O}^i(O^i)^T+\dot{O}^i(\dot{O}^i)^T\Big)+\gamma\dot{O}^i(O^i)^T-\displaystyle\frac{\kappa}{2N}\sum_{k=1}^N\big(O^k(O^i)^T-O^i(O^k)^T\big)=0,\\
\displaystyle ({\bar x}^i, \dot{\bar x}^i) \Big|_{t= 0+} = ({\bar x}^{i0}, \dot{\bar x}^{i0}) \quad \mbox{and} \quad 
({\bar O}^i, \dot{\bar O}^i) \Big|_{t= 0+} = ({\bar O}^{i0}, \dot{\bar O}^{i0}),  \\
 O^{i0}\in\mathrm{SO}(d),\quad \dot{O}^{i0} (O^{i0})^T+O^{i0} (\dot{O}^{i0})^T= O_d,\quad \forall~i = 1, \cdots, N. 
\end{cases}
\end{align*}

\vspace{0.2cm}

Consider the dynamics of $x^i_\alpha$ via Newton's second law. For this, we begin with 
\begin{equation} \label{C-2}
 m\frac{d^2x^i_\alpha}{dt^2} = F^i_{\alpha, a} +  F^i_{\alpha,c}, 
\end{equation} 
where $F^i_{\alpha,a}$ and $F^i_{\alpha,c}$ are {\it dissipative} alignment force and {\it conservative} configuration matching force acting on the vertex $x^i_\alpha$ due to vertex-vertex interactions, respectively. In the sequel, we impose following constraints on $F^i_{\alpha,c}$: \begin{equation} \label{C-3}
 \sum_{\alpha \in \Lambda}  F^i_{\alpha,c} = 0, \quad \|x^i_\alpha(t) -x^i_\beta(t)\| =   \|x^i_\alpha(0) -x^i_\beta(0)\|,
\end{equation} 
for $i = 1, \cdots, N$ and $t \geq 0$. \newline

In the following two subsections, we discuss explicit forms for $ F^i_{\alpha, a}$ and $F^i_{\alpha,c}$. 
\subsection{Dissipative alignment force} \label{sec:3.1}
In this subsection, we study a derivation of alignment force for the center-of-masses. As a dissipative aggregation force, we take the following ansatz:
\begin{equation}\label{C-4}
F^i_{\alpha, a} := \underbrace{-\gamma\frac{dx^i_\alpha}{dt}}_{\mbox{frictional force}} + \underbrace{\frac{\kappa}{N} \sum_{k=1}^N \sum_{\beta\in\Lambda} c_{\alpha\beta}(x^k_\beta-x^i_\alpha)}_{\mbox{distributive alignment force}}.
\end{equation}
The reason for employing the frictional force in the right-hand side of \eqref{C-4} is to make sure that the motion of center-of-mass becomes stationary asymptotically, i.e., without the frictional force, we will have oscillatory motions like harmonic oscillators, whereas the distributive alignment force is employed for the aggregation of corresponding vertex points. Of course, one can use more sophisticated nonlinear alignment as in \cite{H-H-K} for finite-time or algebraically slow alignments. 

Recall that our purpose here is to derive a consensus model  for vertices moving with rigid-body motions. To motivate communication weight $c_{\alpha \beta}$ or network topology, we consider two congruent polytopes moving with rigid-body motions and try to make corresponding vertices coincide together, i.e., if $\alpha=\beta$, we want to make 
\[ \lim_{t \to \infty} \|x_\alpha^i(t)-x_\alpha^j(t) \| = 0. \] 
On the other hand, if $\alpha\neq\beta$ with $ |x_\alpha^i-x_\beta^j |\rightarrow0$, then $|x_\alpha^i-x_\alpha^j|$ can not converge to 0 asymptotically. This can be seen as follows. Suppose there exists $\alpha\neq\beta$ such that  
\[  \lim_{t \to \infty} \|x_\alpha^i(t)-x_\beta^j(t)\| = 0. \]
Then, by triangle inequality, 
\[  \| x_\alpha^i - x_\alpha^j \| \geq \Big| \|x_\alpha^i - x_\beta^j \| - \|x_\alpha^j - x_\beta^j\|   \Big|.   \]
Letting $t \to \infty$, one derive a contradiction:
\[ 0 \geq \|x_\alpha^j(0) - x_\beta^j(0)\| > 0.    \]
Thus, we want to introduce some repulsion between $x_\alpha^i$ and $x_\beta^j$ for $\alpha\neq\beta$. Based on these intuitive arguments, we assume
\begin{equation} \label{C-4-1}
c_{\alpha\beta} >0 \quad \mbox{for $\alpha=\beta$}; \quad c_{\alpha\beta}<0, \quad \mbox{for $\alpha\neq\beta$}. 
\end{equation}
To be consistent with \eqref{C-4-1}, we set 
\begin{equation*} \label{C-4-2}
c_{\alpha\beta} =\delta_{\alpha\beta}(f(1)-f(0))+f(0),
\end{equation*}
where $f(0), f(1)$ and $\delta_{\alpha \beta}$ satisfy the following relations:
\[  f(0)\leq 0\leq f(1) \quad \mbox{and} \quad \delta_{\alpha \beta} = \begin{cases}
1, \quad & \alpha = \beta, \\
0, \quad & \alpha \neq \beta.
\end{cases}
\] 
Then, it is easy to see that 
\[ c_{\alpha \alpha} = f(1) \geq 0, \quad c_{\alpha \beta} = f(0) \leq 0 \quad \mbox{for $\alpha \neq \beta$}. \]
On the other hand, for the derivation of dynamics for ${\bar x}^i$, we sum up \eqref{C-2} using \eqref{C-3} to see
\begin{equation} \label{C-5}
m\sum_{\alpha\in\Lambda}\frac{d^2x_\alpha^i}{dt^2} = \sum_{\alpha\in\Lambda}F_{\alpha,a}^i.
\end{equation}
Note that the configuration matching force disappears in the R.H.S. of \eqref{C-5}. Next, we use \eqref{C-0}, \eqref{C-4} and \eqref{C-5} to see 
\begin{align*}
\begin{aligned} \label{C-6}
m\frac{d^2{\bar x}^i}{dt^2}&=\frac{m}{n}\sum_{\alpha\in\Lambda}\frac{d^2 x_\alpha^i}{dt^2}=\frac{1}{n}\sum_{\alpha\in\Lambda}F_{\alpha, a}^i \\
&=\frac{1}{n}\sum_{\alpha\in\Lambda}\left(-\gamma\frac{dx^i_\alpha}{dt}+\frac{\kappa}{N}\sum_{\beta\in\Lambda}\sum_{k=1}^Nc_{\alpha\beta}(x^k_\beta-x^i_\alpha)\right)\\
&=-\gamma\frac{d{\bar x}^i}{dt}+\frac{\kappa}{nN}\sum_{\alpha,\beta\in\Lambda}\sum_{k=1}^N \Big(\delta_{\alpha\beta}(f(1)-f(0))+f(0) \Big)(x_\beta^k-x_\alpha^i)\\
&=-\gamma\frac{d{\bar x}^i}{dt}+\frac{\kappa}{nN}\sum_{k=1}^N\left((f(1)-f(0))\sum_{\alpha\in\Lambda}(x_\alpha^k-x_\alpha^i)+f(0)\sum_{\alpha, \beta\in\Lambda}(x_\beta^k-x_\alpha^i)\right)\\
&=-\gamma\frac{d{\bar x}^i}{dt}+\frac{\kappa}{N}\sum_{k=1}^N\left((f(1)-f(0))( {\bar x}^k-{\bar x}^i)+nf(0)( {\bar x}^k- {\bar x}^i)\right)\\
&=-\gamma\frac{d{\bar x}^i}{dt}+(f(1)+(n-1)f(0))\cdot\frac{\kappa}{N}\sum_{k=1}^N({\bar x}^k-{\bar x}^i).
\end{aligned}
\end{align*}
For the alignment of center-of-masses, we set 
\[ \eta=f(1)+(n-1)f(0) > 0, \quad {\tilde \kappa} := \eta \kappa \]
and derive a second-order linear consensus model for center-of-masses:
\begin{equation}\label{C-7}
m\ddot{\bar x}^i =-\gamma \dot{\bar x}^i+\frac{{\tilde \kappa}}{N}\sum_{k=1}^N ({\bar x}^k- {\bar x}^i).
\end{equation}
Note that for a sufficiently small $m$, the second-order system \eqref{C-7} can be well-approximated by the corresponding first-order linear consensus model by the direct application of  Tikhonov's theory (see \cite{H-S} for a related problem):
\begin{align*}
\dot{{\bar x}}^i=\frac{\tilde{\kappa}}{\gamma N}\sum_{k=1}^N ({\bar x}^k- {\bar x}^i).
\end{align*}
\subsection{Conservative configuration force} \label{sec:3.2} In this subsection, we discuss the derivation of dynamics for  the displacement $r_\alpha^i$, in other words, the dynamics of $O^i_\alpha$. Since initial configuration point set $X^i(0)$ are congruent for all $i$ and they tend to the same position asymptotically after translations and rotations, there exist constant vectors $r_\alpha$ with $\alpha\in\Lambda$ such that 
\[
r_\alpha^i(t)={O}^{i}(t) r_\alpha, \quad O^i \in SO(d),
\]
for all $i=1, 2, \cdots, N$ and $\alpha\in\Lambda$. Hence, to see the motion of displacement $r_\alpha^i$, we only need to know the governing system for the orthogonal matrix $O^i$. Once we know $O^i$, we can determine the position of $x_\alpha^i$:
\begin{align}\label{dec}
x_\alpha^i(t)={\bar x}^i(t)+O^i(t)r_\alpha.
\end{align}

\vspace{0.2cm}

In the following lemma, we consider Netwon's system with a conservative forcing $F$:
\begin{equation} \label{C-3-1}
 {\dot x} = v, \quad m{\dot v} = F, \quad t > 0. 
\end{equation} 
\begin{lemma} \label{L3.1}
Suppose that the conservative force $F$ takes the following form:
\begin{equation} \label{C-3-1-1}
  F = A x \quad  A \in {\mathbb R}^{d \times d}, 
\end{equation} 
and let $(x, v)$ be a solution to \eqref{C-3-1}. Then, the matrix $A$ is symmetric:
\[ A^T  = A. \]
\end{lemma}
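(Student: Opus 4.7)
The plan is to unpack the defining property of a conservative force and then invoke the equality of mixed partial derivatives. Recall that a force field $F$ on $\bbr^d$ is called conservative precisely when it admits a scalar potential $U$ with $F = -\nabla U$ on the domain where $F$ is defined. Since the ansatz $F(x) = Ax$ is smooth (in fact linear) and defined on all of $\bbr^d$, which is simply connected, the conservative assumption supplies a function $U \in C^2(\bbr^d)$ such that $Ax = -\nabla U(x)$ for every $x \in \bbr^d$.

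First I would write the potential identity componentwise as $-\partial U/\partial x_i = \sum_{j=1}^d A_{ij}\, x_j$ for each $i = 1, \ldots, d$. Differentiating once more with respect to $x_k$ gives $-\partial^2 U/\partial x_k \partial x_i = A_{ik}$. Next, since $U \in C^2$, Clairaut's theorem on the equality of mixed partials applies, and swapping the roles of $i$ and $k$ yields
\[
A_{ik} = -\frac{\partial^2 U}{\partial x_k\, \partial x_i} = -\frac{\partial^2 U}{\partial x_i\, \partial x_k} = A_{ki} \qquad \text{for all } i, k \in \{1, \ldots, d\},
\]
which is exactly the symmetry $A^T = A$ claimed by the lemma. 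Equivalently, one may observe that any linear force $F(x) = Ax$ admitting a potential must come from the quadratic form $U(x) = -\tfrac12 x^T A x$, and for this quadratic identity to be consistent $A$ itself must equal its symmetric part.

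The proof is essentially a textbook exercise, so no real obstacle arises; the only fine print worth flagging is that one should take the conservative hypothesis in its potential form $F = -\nabla U$ rather than in its "vanishing line-integral along closed loops" form, but these two formulations are equivalent on the simply connected domain $\bbr^d$. I emphasize that the role of the lemma in the paper is structural: it is used to restrict the admissible shape of the conservative configuration force $F^i_{\alpha, c}$ introduced in \eqref{C-2}, so that the symmetry conclusion becomes the key constraint feeding into the derivation of the second-order Lohe matrix equation \eqref{A-0-2}.
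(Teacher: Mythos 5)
Your proof is correct, but it reaches the symmetry of $A$ by a more direct route than the paper does. You take ``conservative'' to mean $F = -\nabla U$ for a $C^2$ potential $U$, differentiate the componentwise identity $-\partial U/\partial x_i = \sum_j A_{ij}x_j$ once more, and invoke Clairaut's theorem to get $A_{ik} = A_{ki}$. The paper instead starts from the energy balance $\tfrac{d}{dt}E_k = \langle v, F\rangle$ together with conservation of $E_k + E_p$, splits $A = B + C$ into its symmetric and skew parts, absorbs the symmetric part into the exact differential $d\bigl(\tfrac12\langle x, Bx\rangle\bigr)$, and then shows by an explicit wedge-product computation that the leftover $1$-form $\langle dx, Cx\rangle$ can only be exact (indeed closed) if $C = 0$. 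The two arguments rest on the same underlying fact --- the work $1$-form of a linear field $Ax$ is closed if and only if $A$ is symmetric --- but yours is the shorter, textbook version, while the paper's phrasing in terms of kinetic/potential energy and differential forms is chosen to match the physical derivation it is embedded in (it never names a potential $U$ explicitly, only postulates that one exists via energy conservation). Your closing remark that the potential form and the closed-loop form of ``conservative'' agree on the simply connected domain $\bbr^d$ is a reasonable piece of fine print that the paper glosses over; the secondary observation that $U$ must be the quadratic form $-\tfrac12 x^TAx$ is slightly loose as stated (the gradient of that form is $-\tfrac12(A+A^T)x$, not $-Ax$), but your main Clairaut argument does not depend on it.
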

\begin{proof} Let $E_k$ and $E_p$ be the kinetic and potential energies, respectively. Note that the kinetic energy $E_k$ takes a definite form:
\begin{equation} \label{C-3-2}
 E_k = \frac{1}{2} m \langle v, v \rangle, 
\end{equation} 
where $\langle \cdot, \cdot \rangle$ is a standard inner product in ${\mathbb R}^d$. Then, it follows from \eqref{C-3-1} and \eqref{C-3-2} that  
\begin{equation} \label{C-3-3}
\frac{d}{dt}E_k=v\cdot F.
\end{equation}
Let $E_p$ be a potential energy related to system \eqref{C-3-1}, and we set the total energy $E$ as 
\[ E := E_k + E_p. \]
Then, by energy conservation law, one has 
\begin{equation} \label{C-3-4}
\frac{d}{dt}(E_k+E_p)=0.
\end{equation}
Now we use the special ansatz for $F$ and decompose the matrix $A$ as a sum of symmetric and skew-symmetric parts:
\begin{equation} \label{C-3-5}
F =B x +Cx,
\end{equation}
where
\[
B =\mathrm{Sym}(A) = \frac{1}{2} (A + A^T),\qquad C= \mathrm{Skew}(A) = \frac{1}{2} (A - A^T).
\]
Note that the relations \eqref{C-3-3}, \eqref{C-3-4} and \eqref{C-3-5} imply
\[
\frac{dE_k}{dt} =\langle \dot{x}, Bx \rangle+\langle \dot{x}, C x\rangle.
\]
Since $B$ is a symmetric matrix, the above relation can be written as 
\[
\frac{d}{dt}E_k=\frac{d}{dt}\left(\frac{1}{2}\langle x, B x \rangle\right)+\langle \dot{x}, C x \rangle.
\]
This implies
\[ \frac{d}{dt}\left(E_k-\frac{1}{2}\langle x, B x \rangle\right)=\langle \dot{x}, C x \rangle, \]
or equivalently
\begin{equation} \label{C-3-6}
d\left(E_k-\frac{1}{2}\langle x, B x \rangle\right)=\langle dx, C x \rangle.
\end{equation}
Since the left-hand side of \eqref{C-3-6} is in exact form, by taking differential both sides of \eqref{C-3-6}, one has 
\begin{equation} \label{C-3-7}
d \langle dx, C x \rangle = 0.
\end{equation}
We denote by $[x]_i$ be the $i$-th component of $x$. Then, the term $\langle dx, C x \rangle$ can be written as a component form:
\begin{equation} \label{C-3-8}
 \langle dx, C x \rangle  =  \sum_{j} [dx]_j [C x]_j =\sum_{i,j } [dx]_j [C]_{ji} [x]_i. 
\end{equation}
By \eqref{C-3-7}, \eqref{C-3-8} and skew-symmetry of $C$, one has 
\begin{align}
\begin{aligned} \label{C-3-9}
0 &= d \Big ( \sum_{i,j } [dx]_j [C]_{ji} [x]_i \Big) =   \sum_{i,j } [C]_{ji} [dx]_i \wedge [dx]_j  \\
&=   \sum_{i < j}  [C]_{ji}   [dx]_i \wedge [dx]_j +  \sum_{i > j}  [C]_{ji}   [dx]_i \wedge [dx]_j \\
& = \sum_{i < j}  [C]_{ji}   [dx]_i \wedge [dx]_j +  \sum_{j > i}  [C]_{ij}   [dx]_j \wedge [dx]_i \\
&  = \sum_{i < j}  ([C]_{ji} -   [C]_{ij} )  [dx]_i \wedge [dx]_j \\
& = -2\sum_{i < j}  [C]_{ij}   [dx]_i \wedge [dx]_j,
\end{aligned}
\end{align}
where we used the relations for wedge product:
\[ d[x]_i \wedge d[x]_i = 0, \quad d[x]_j \wedge d[x]_i = -d[x]_i \wedge d[x]_j, \quad 1 \leq i, j \leq d. \]
Thus, relation \eqref{C-3-9} implies
\[  [C]_{ij} = 0 \quad \mbox{for}~~i < j. \]
Then, by the skew symmetry of $C$, one obtains
\[ C \equiv 0 \]
and we obtain the desired estimate.
\end{proof}
Now, we return to the special situation in which only vertices with the same index interact each other:
\begin{equation} \label{C-8}
m\ddot{x}_\alpha^i= -\gamma {\dot x}^i_\alpha +\frac{\kappa}{N}\sum_{k=1}^N(x^k_\alpha-x^i_\alpha) +F_{\alpha, c}^i(=F_{\alpha, a}^i+F_{\alpha, c}^i).
\end{equation} 
We substitute the ansatz
\[
x_\alpha^i(t)=\bar{x}^i(t)+O^i(t)r_\alpha,
\]
into \eqref{C-8} to get
\begin{equation} \label{C-9}
m(\ddot{{\bar x}}^i+\ddot{O}^ir_\alpha)=-\gamma(\dot{{\bar x}}^i+\dot{O}^ir_\alpha)+\frac{\kappa}{N}\sum_{k=1}^N({\bar x}^k+O^kr_\alpha-{\bar x}^i-O^ir_\alpha)+F^i_{\alpha, c}.
\end{equation}
We use \eqref{C-7} to simplify \eqref{C-9} further as 
\begin{equation*} \label{C-9-1}
m\ddot{O}^i r_\alpha=-\gamma\dot{O}^ir_\alpha+\frac{\kappa}{N}\sum_{k=1}^N(O^k r_\alpha-O^i r_\alpha)+F^i_{\alpha,c}.
\end{equation*}
This gives the configuration matching force:
\begin{align}
\begin{aligned}  \label{C-10}
F_{\alpha, c}^i &=  \left(m\ddot{O}^i+\gamma \dot{O}^i-\frac{\kappa}{N}\sum_{k=1}^N(O^k-O^i)\right)r_\alpha  \\
&= \left(m\ddot{O}^i+\gamma \dot{O}^i-\frac{\kappa}{N}\sum_{k=1}^N(O^k-O^i)\right) (O_\alpha^i)^T r_\alpha^i,
\end{aligned}
\end{align}
where we used the relation:
\[ r_\alpha^i = O^i r_\alpha  \quad \mbox{or equivalently} \quad r_\alpha = (O^i)^T r_\alpha^i. \]
By comparing the ansatz in \eqref{C-3-1-1}, we set 
\begin{equation*} \label{C-10-1}
A^i:=\left(m\ddot{O}^i+\gamma \dot{O}^i-\frac{\kappa}{N}\sum_{k=1}^N(O^k-O^i)\right)(O^i)^T,
\end{equation*}
so that 
\begin{equation} \label{C-11}
F_{\alpha, c}^i = A^i r_\alpha^i =  A^i O^i r_\alpha, \qquad \forall~\alpha\in\Lambda.
\end{equation}
Then, it follows from Lemma \ref{L3.1} that $A^i$ is symmetric:
\[ (A^i)^T = A^i, \quad i = 1, \cdots, N.   \]
\begin{lemma}\label{L3.2}
Suppose the initial configuration lies in a general position such that
\begin{equation} \label{C-13}
\mbox{span} \{r_{\alpha}:~\alpha \in \Lambda \}  = \bbr^d,  
\end{equation}
and let $\{x_\alpha^i \}$ be a solution to \eqref{C-8}. Then, symmetric matrix $A^i$ is given as follows:
\[    A^i=-m \dot{O}^i(\dot{O}^i)^T-\frac{\kappa}{2N}\sum_{k=1}^N \Big[ O^k(O^i)^T+O^i(O^k)^T-2I_d \Big ]. \]
\end{lemma}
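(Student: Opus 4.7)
The plan is to establish the formula in two conceptual steps: first, invoke the symmetry $(A^i)^T = A^i$ that follows from Lemma \ref{L3.1} under the general position hypothesis \eqref{C-13}; and second, compute the symmetric part of the explicit expression for $A^i$ directly.

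The role of \eqref{C-13} is the following. The explicit definition
\begin{equation*}
A^i = m\ddot{O}^i(O^i)^T + \gamma\dot{O}^i(O^i)^T - \frac{\kappa}{N}\sum_{k=1}^N\bigl(O^k(O^i)^T - I_d\bigr)
\end{equation*}
gives a definite matrix, but the energy-conservation argument in Lemma \ref{L3.1} only tests the skew-symmetric part of $A^i$ on the vectors $\{r_\alpha^i = O^i r_\alpha\}_{\alpha\in\Lambda}$ along which the conservative force is sampled. Since $O^i\in SO(d)$ is invertible and $\mathrm{span}\{r_\alpha\}=\bbr^d$, the family $\{r_\alpha^i\}$ spans $\bbr^d$ as well, and this upgrades the partial vanishing of the skew part to the full identity $(A^i)^T = A^i$.

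Given this symmetry, it suffices to symmetrize the three summands above term by term. Differentiating $O^i(O^i)^T = I_d$ once and twice yields
\begin{equation*}
\dot{O}^i(O^i)^T + O^i(\dot{O}^i)^T = O_d, \qquad \ddot{O}^i(O^i)^T + O^i(\ddot{O}^i)^T + 2\dot{O}^i(\dot{O}^i)^T = O_d,
\end{equation*}
so that $\dot{O}^i(O^i)^T$ is skew-symmetric and $\mathrm{Sym}(\ddot{O}^i(O^i)^T) = -\dot{O}^i(\dot{O}^i)^T$. Combined with the elementary identity $\mathrm{Sym}(O^k(O^i)^T - I_d) = \frac{1}{2}(O^k(O^i)^T + O^i(O^k)^T - 2I_d)$, the friction term disappears, the inertial term collapses to $-m\dot{O}^i(\dot{O}^i)^T$, and the coupling term assumes the form $-\frac{\kappa}{2N}\sum_{k=1}^N(O^k(O^i)^T + O^i(O^k)^T - 2I_d)$, yielding the claim.

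The main obstacle, and really the only subtlety, is the span argument in the second paragraph: Lemma \ref{L3.1} was proved for Newtonian dynamics of a single particle with $x$ ranging over $\bbr^d$, while here the conservative force is only tested against the finite family $\{r_\alpha^i\}$; the general position assumption \eqref{C-13} is precisely what is needed to transfer the symmetry conclusion to $A^i$ as a linear operator on all of $\bbr^d$. The remaining steps are routine manipulations with the orthogonality constraints.
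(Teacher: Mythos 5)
Your proposal is correct and follows essentially the same route as the paper: the paper likewise uses the span condition \eqref{C-13} to pass from the identity on the vectors $r_\alpha$ to the matrix identity, writes $A^i$ as the average of itself and its transpose (your term-by-term symmetrization), and then applies the first and second derivatives of $O^i(O^i)^T=I_d$ to kill the friction term and reduce the inertial term to $-m\dot O^i(\dot O^i)^T$. Your remark on why \eqref{C-13} is needed to legitimately invoke the symmetry of $A^i$ is, if anything, spelled out more carefully than in the paper.
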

\begin{proof}

\noindent $\bullet$~Step A:  Let $O^i(t)$ be a smooth o$SO(d)$-valued function in $t$:
\[
O^i(O^i)^{T} =I_d.
\]
Then, we differentiate above relation with respect to $t$ successively twice to get 
\begin{equation} \label{C-13-1}
\dot{O}^i (O^i)^T+O^i (\dot{O}^i)^T= O_d \quad \mbox{and} \quad 
\ddot{O}^i (O^i)^T+2\dot{O}^i (\dot{O}^i)^T+O^i (\ddot{O}^i)^T= O_d.
\end{equation}

\vspace{0.5cm}

\noindent $\bullet$~Step B:~It follows from \eqref{C-10} and \eqref{C-11} that 
\[
\left(m\ddot{O}^i+\gamma \dot{O}^i-\frac{\kappa}{N}\sum_{k=1}^N(O^k-O^i)\right)r_\alpha=A^iO^ir_\alpha.
\]
This implies
\begin{equation} \label{C-13-1-1}
\left(m\ddot{O}^i+\gamma \dot{O}^i-\frac{\kappa}{N}\sum_{k=1}^N (O^k-O^i)-A^iO^i \right)r_\alpha= 0, \quad \forall~\alpha \in \Lambda.
\end{equation}
By the assumption \eqref{C-13}, relation \eqref{C-13-1-1} should hold for all vectors $y \in {\mathbb R}^d$, i.e., 
\begin{equation*} \label{C-13-1-2}
\left(m\ddot{O}^i+\gamma \dot{O}^i-\frac{\kappa}{N}\sum_{k=1}^N(O^k-O^i)-A^iO^i \right) y= 0, \quad \forall~y \in {\mathbb R}^d.
\end{equation*}
Thus, we have
\begin{equation} \label{C-13-2}
m\ddot{O}^i+\gamma \dot{O}^i-\frac{\kappa}{N}\sum_{k=1}^N(O^k-O^i)-A^iO^i= O_d, 
\end{equation}
i.e.,
\begin{equation} \label{C-13-3}
 A^i = m\ddot{O}^i  (O^i)^T  +\gamma \dot{O}^i  (O^i)^T  -\frac{\kappa}{N}\sum_{k=1}^N(O^k  (O^i)^T -I_d).
 \end{equation}
 We again take a transpose of \eqref{C-13-3} and use the relation $(A^i)^{T} = A^i$ to get 
\begin{equation} \label{C-14}
A^i = (A^i)^T =m  O^i (\ddot{O}^i)^T +\gamma O^i (\dot{O}^i)^T -\frac{\kappa}{N}\sum_{k=1}^N \Big(O^i (O^k)^T-I_d \Big).
\end{equation}
We add two relations \eqref{C-13-3} and \eqref{C-14} to see
\begin{align*}
\begin{aligned}
A^i &=\frac{m}{2} (\ddot{O}^i(O^i)^T+{O}^i(\ddot{O}^i)^T)+ \frac{\gamma}{2} (\dot{O}^i (O^i)^T+  O^i(\dot{O}^i)^T) \\
&-\frac{\kappa}{2N}\sum_{k=1}^N(O^k(O^i)^T+O^i(O^k)^T-2I_d).
\end{aligned}
\end{align*}
Now we use the relations \eqref{C-13-1} in Step A to simply the above relation as
\begin{align}\label{C-15}
A^i=-m \dot{O}^i(\dot{O}^i)^T-\frac{\kappa}{2N}\sum_{k=1}^N \Big(O^k(O^i)^T+O^i(O^k)^T-2I_d \Big).
\end{align}
\end{proof}


Now we are ready to derive the dynamics for $O^i$. We combine \eqref{C-13-2} and \eqref{C-15} to get 
\begin{align*}
\begin{aligned}
& -m \dot{O}^i(\dot{O}^i)^T-\frac{\kappa}{2N}\sum_{k=1}^N(O^k(O^i)^T+O^i(O^k)^T-2I_d) \\
& \hspace{4cm} =m\ddot{O}^i(O^i)^T+\gamma \dot{O}^i (O^i)^T-\frac{\kappa}{N}\sum_{k=1}^N(O^k(O^i)^T-I_d),
\end{aligned}
\end{align*}
After rearrangement, one has the second-order Lohe matrix model introduced in \cite{H-K}:
\begin{align*}
m(\ddot{O}^i(O^i)^T+\dot{O}^i(\dot{O}^i)^T)+\gamma \dot{O}^i (O^i)^T=\frac{\kappa}{2N}\sum_{k=1}^N \Big(O^k(O^i)^T-O^i(O^k)^T \Big).
\end{align*}
Note that the above system was set up in the aforementioned reference without any derivations, whereas, we have derived the second-order matrix model starting from reasonable physical arguments. \newline

In the following section, we provide generalizations for the consensus algorithms for the ensemble of similar polytopes and mixed ensemble consisting of distinct types of polytopes.

\section{Extension and analytical results} \label{sec:4}
\setcounter{equation}{0}
In this section, we present two possible extensions of the model \eqref{A-1} for  the shape matchings in the ensembles of similar polytopes and heterogeneous ones, and provide analytical results for the complete shape matching on the ensemble of congruent polytopes. 

\subsection{Extensions to similar and heterogeneous polytopes} \label{sec:4.1}

 In this subsection, we provide straightforward extensions for the model \eqref{A-1} for two ensembles of similar polytopes with the same geometric shapes and mixed of them with different shapes. 
\subsubsection{Similar polytope ensemble} \label{sec:4.1.1}
Consider an ensemble of similar polytopes with the same geometric shape. Then, due to non-congruence between polytopes, the vertices of polytope will not be matched exactly, so we instead a design a dynamical system for similar polytopes leading to the regularly placed patterns, e.g., concentric circles with the same center for a family of circles. 

For a referenced family of largest congruent polytopes $\{X^i=\{x_\alpha^i\}_{\alpha\in\Lambda} \}_{i=1}^{N},$ we set the decomposition of each vertex point $x^i_\alpha$ as
\[
x_\alpha^i(t)=\bar{x}^i(t)+O^i(t)r_\alpha
\]
for some vectors $\{r_\alpha\}_{\alpha\in\Lambda}$. For other family of similar polytopes $\{Y_\alpha^i\}_{i}$, there exists a positive ratio $s^i > 0$ such that 
\[
y_\alpha^i(t)=\bar{y}^i(t)+s^i O^i(t) r_\alpha, \quad  i = 1, \cdots, N. 
\]
From the same arguments in Sections \ref{sec:3.1} and \ref{sec:3.2}, we can obtain the following dynamical system for $\{\bar{y}^i,O^i\}$:
\begin{align}\label{F-1}
\begin{cases}
\displaystyle m\ddot{{\bar y}}^i = -\gamma\dot{{\bar y}}^i + \frac{\kappa}{N}\sum_{k=1}^N({\bar y}^k- {\bar y}^i), \quad r^i_\alpha(t)= s^i O^i(t) r_\alpha, \\
\displaystyle m\Big(\ddot{O}^i(O^i)^T+ \dot{O}^i(\dot{O}^i)^T\Big) = -\gamma\dot{O}^i(O^i)^T + \frac{\kappa}{2N}\sum_{k=1}^N\frac{s^k}{s^i}\big(O^k(O^i)^T-O^i(O^k)^T\big),\\
\displaystyle ({\bar y}^i, \dot{\bar y}^i) \Big|_{t= 0+} = ({\bar y}^{i0}, \dot{\bar y}^{i0}) \quad \mbox{and} \quad 
({\bar O}^i, \dot{\bar O}^i) \Big|_{t= 0+} = ({\bar O}^{i0}, \dot{\bar O}^{i0}), \\
O^{i0}\in\mathrm{SO}(d),\quad \dot{O}^{i0} (O^{i0})^T+O^{i0} (\dot{O}^{i0})^T= O_d,\quad, \forall~i = 1, \cdots, N.
\end{cases}
\end{align}
Note that for the case
 \[  s^i = s^k, \quad 1 \leq i, k \leq N, \]
 system \eqref{F-1} reduces to system \eqref{A-1}. The factor $\frac{s^k}{s^i}$ in the coupling term of $\eqref{F-1}_2$ may not be symmetric with respect to $i$ and $k$. Hence the analysis employed in \cite{H-K} may not be applied in our situation directly. Thus, we do not know whether subsystem $\eqref{F-1}_2$ will exhibit emergent dynamics or not. As far as the authors know, even for the first-order Lohe matrix model on the non-symmetric network, emergent behaviors were not studied in previous literature. Thus, we only study emergent behaviors via  numerical simulations in Section \ref{sec:5}.

\subsubsection{Heterogeneous polytopes ensemble} \label{sec:4.1.2}
In this subsection, we discuss a possible extension of \eqref{A-1} for the aggregation of polytopes with the same shape in an ensemble of  heterogeneous polytopes. To avoid collapse between distinct types of polytopes, we use a quadratic potential \cite{P-K-H} between different type of polytopes:
\begin{equation*} \label{F-1-1}
V(x,y)=\frac{\kappa_2}{2}(\|x-y\|-L)^2.
\end{equation*}
To simplify following discussions, we assume that there are two types of polytopes under considerations. We assume that there are $N_1$ and $N_2$ for type 1 and type 2 respectively. We also denote the coordinate of type 1 and type 2 as follows:
\begin{align*}
x^i_\alpha=\bar{x}^i+U^i r_\alpha\quad\forall i\in \{1, 2, \cdots, N_1\},\quad \alpha\in \Lambda_1,\\
y^i_\beta=\bar{y}^i+V^i \tilde{r}_\beta\quad\forall j\in \{1, 2, \cdots, N_2\}, \quad \beta\in \Lambda_2. 
\end{align*}
Then we propose the following coupled dynamics for $(\bar{x}^i, U^i)$ and $(\bar{y}^j, V^j)$:
\begin{equation} \label{F-2}
\begin{cases}
\displaystyle m\ddot{{\bar x}}^i+\gamma\dot{{\bar x}}^i-\displaystyle\frac{\kappa_1}{N_1}\sum_{k=1}^{N_1}({\bar x}^k- {\bar x}^i)-\frac{\kappa_2}{N_2}\sum_{l=1}^{N_2}(\|\bar{y}^l-\bar{x}^i\|-L)\frac{\bar{y}^l-\bar{x}^i}{\|\bar{y}^l-\bar{x}^i\|}=0,  \\
\displaystyle m\ddot{{\bar y}}^j+\gamma\dot{{\bar y}}^j-\displaystyle\frac{\kappa_1}{N_2}\sum_{l=1}^{N_2}({\bar y}^l- {\bar y}^j)-\frac{\kappa_2}{N_1}\sum_{k=1}^{N_1}(\|\bar{x}^k-\bar{y}^j\|-L)\frac{\bar{x}^k-\bar{y}^j}{\|\bar{x}^k-\bar{y}^j\|}=0,  \\
\displaystyle m\Big(\ddot{U}^i(U^i)^T+\dot{U}^i(\dot{U}^i)^T\Big)+\gamma\dot{U}^i(U^i)^T-\displaystyle\frac{\kappa_1}{2N_1}\sum_{k=1}^{N_1}\big(U^k(U^i)^T-U^i(U^k)^T\big)=0,\\
\displaystyle m\Big(\ddot{V}^j(V^j)^T+\dot{V}^j(\dot{V}^j)^T\Big)+\gamma\dot{V}^j(V^j)^T-\displaystyle\frac{\kappa_1}{2N_2}\sum_{l=1}^{N_2}\big(V^l(V^j)^T-V^j(V^l)^T\big)=0,\\
\displaystyle ({\bar x}^i, \dot{\bar x}^i) \Big|_{t= 0+} = ({\bar x}^{i0}, \dot{\bar x}^{i0}), \quad 
({\bar U}^i, \dot{\bar U}^i) \Big|_{t= 0+} = ({\bar U}^{i0}, \dot{\bar U}^{i0}),\\
\displaystyle ({\bar y}^j, \dot{\bar y}^j) \Big|_{t= 0+} = ({\bar y}^{j0}, \dot{\bar y}^{j0}),\quad 
({\bar V}^j, \dot{\bar V}^j) \Big|_{t= 0+} = ({\bar V}^{j0}, \dot{\bar V}^{j0}),\\
 U^{i0},~V^{j0}\in\mathrm{SO}(d),\quad \dot{U}^{i0} (U^{i0})^T+U^{i0} (\dot{U}^{i0})^T= O_d,\quad \dot{V}^{j0} (V^{j0})^T+V^{j0} (\dot{V}^{j0})^T= O_d\\
\forall~i \in \{1,2, \cdots, N_1\},\quad  j\in \{1, 2, \cdots, N_2\},\quad L\in\bbr^+.
\end{cases}
\end{equation}
where $\kappa_1$ and $\kappa_2$ are nonnegative constants. \newline

Note that compared to \eqref{A-1} for a homogeneous ensemble, new forcing terms appearing in the R.H.S. of \eqref{F-2} are 
\begin{align*}
\begin{aligned}
& (\|\bar{y}^l-\bar{x}^i\|-L)\frac{\bar{y}^l-\bar{x}^i}{\|\bar{y}^l-\bar{x}^i\|} = -\nabla_{x^i} V(x^i,y^l), \\
& (\|\bar{x}^k-\bar{y}^j\|-L)\frac{\bar{x}^k-\bar{y}^j}{\|\bar{x}^k-\bar{y}^j\|} = -\nabla_{y^j} V(x^k,y^j). 
\end{aligned}
\end{align*}
In the absence of repulsive forcing with $\kappa_2 = 0$, system \eqref{F-2} becomes the juxtaposition of two models \eqref{A-1} for $(\bar{x}^i, U^i)$ and $(\bar{y}^j, V^j)$. 
\subsubsection{The second-order dynamics} \label{sec:4.1.3}
Consider the Cauchy problem to the second-order dynamics:
\begin{align}\label{D-1}
\begin{cases}
\displaystyle m\ddot{{\bar x}}^i = -\gamma\dot{{\bar x}}^i + \frac{\kappa}{N}\sum_{k=1}^N({\bar x}^k- {\bar x}^i), \quad r^i_\alpha(t)=O^i(t) r_\alpha, \\
\displaystyle m\Big(\ddot{O}^i(O^i)^T+ \dot{O}^i(\dot{O}^i)^T\Big) = -\gamma\dot{O}^i(O^i)^T + \frac{\kappa}{2N}\sum_{k=1}^N\big(O^k(O^i)^T-O^i(O^k)^T\big),\\
\displaystyle ({\bar x}^i, \dot{\bar x}^i) \Big|_{t= 0+} = ({\bar x}^{i0}, \dot{\bar x}^{i0}) \quad \mbox{and} \quad 
({\bar O}^i, \dot{\bar O}^i) \Big|_{t= 0+} = ({\bar O}^{i0}, \dot{\bar O}^{i0}), \\
O^{i0}\in\mathrm{SO}(d),\quad \dot{O}^{i0} (O^{i0})^T+O^{i0} (\dot{O}^{i0})^T= O_d\quad \forall~i = 1, \cdots, N.
\end{cases}
\end{align}
Note that the center-of-mass dynamics and displacement dynamics are completely decoupled, so we can use results summarized Propositions \ref{P2.1} and \ref{P2.2} in Section \ref{sec:2}. 
\begin{theorem} \label{T4.1}
\emph{(Complete shape matching)}
Let $\{X^i(t)\}_{i=1}^N=\{\{x_\alpha^i(t)\}_{\alpha\in \Lambda}\}_{i=1}^N$ be a solution to \eqref{D-1}. Then, one has the complete aggregation:
\begin{align*}
\lim_{t \to \infty} \|x_\alpha^i(t)-x_\alpha^j(t) \| = 0,
\end{align*}
for all $\alpha\in \Lambda$ and $i, j\in \{1, 2, \cdots, N\}$.
\end{theorem}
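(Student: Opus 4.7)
The plan is to exploit the decoupled structure of system \eqref{D-1}, which cleanly separates the translational part (center-of-mass) from the rotational part (matrix dynamics on $SO(d)^N$), so that the problem reduces entirely to results already collected in Section \ref{sec:2}. The guiding observation is the decomposition
\[
x_\alpha^i(t) = \bar{x}^i(t) + O^i(t)\, r_\alpha, \qquad \alpha\in\Lambda,\ i = 1,\dots,N,
\]
recorded in \eqref{A-0-0-0} and \eqref{A-0-1}. Since the reference displacement $r_\alpha$ is a fixed vector in $\bbr^d$, the entire shape-matching question is controlled by the pairwise diameters of $\{\bar{x}^i\}$ and $\{O^i\}$.

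First, I would subtract the decompositions for two indices $i,j$ and apply the triangle inequality together with the operator-norm bound $\|(O^i - O^j) r_\alpha\| \le \|O^i - O^j\|_F \|r_\alpha\|$, giving
\[
\|x_\alpha^i(t) - x_\alpha^j(t)\| \le \|\bar{x}^i(t) - \bar{x}^j(t)\| + \|r_\alpha\|\, \|O^i(t) - O^j(t)\|_F.
\]
Since $|\Lambda|$ is finite, $\max_{\alpha\in\Lambda}\|r_\alpha\|$ is a finite constant, so it suffices to show the two diameters $\mathcal{D}(\bar{\mathcal{X}})$ and $\mathcal{D}(\mathcal{O})$ both vanish as $t\to\infty$.

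For the center-of-mass subsystem $\eqref{D-1}_1$, this is exactly the second-order linear consensus model \eqref{B-1} with $m,\gamma,\kappa > 0$, so Proposition \ref{P2.1}(i) yields an exponential decay of $\mathcal{D}(\bar{\mathcal{X}}(t))$, in either the overdamped or underdamped regime. For the rotation subsystem $\eqref{D-1}_2$, together with the initial-data constraint $\dot O^{i0}(O^{i0})^T + O^{i0}(\dot O^{i0})^T = O_d$, this is precisely the Cauchy problem \eqref{B-3}--\eqref{B-4} for the second-order Lohe matrix model; Lemma \ref{L2.1} guarantees the orthogonality $O^i(t)\in SO(d)$ is preserved for all $t \geq 0$, and Proposition \ref{P2.2}(i) then gives $\lim_{t\to\infty}\mathcal{D}(\mathcal{O}(t)) = 0$. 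Combining these two limits in the triangle-inequality bound above delivers the desired complete shape matching.

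There is essentially no genuine obstacle here, because the two subsystems are truly decoupled and each of the hard analytical inputs (the Barbalat-type argument for the matrix model and the constant-coefficient linear analysis for the centroids) has already been established in Propositions \ref{P2.1} and \ref{P2.2}. The only mild subtlety is making sure the product-space structure is used cleanly: I would phrase the conclusion through the consensus-on-product-manifold principle of Section \ref{sec:2.3}, so that the two convergences combine canonically rather than via an ad hoc triangle inequality. If one wanted an explicit rate, the main limitation would be that Proposition \ref{P2.2}(i) only provides asymptotic convergence without a rate in the $m>0$ case, so the theorem as stated (qualitative convergence) is the strongest conclusion immediately available from the cited inputs.
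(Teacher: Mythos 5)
Your proposal is correct and follows essentially the same route as the paper: both invoke Propositions \ref{P2.1} and \ref{P2.2} for the decoupled centroid and rotation subsystems, and then combine the two convergences through the decomposition $x_\alpha^i = \bar{x}^i + O^i r_\alpha$ and the triangle inequality with the bound $\|(O^i-O^j)r_\alpha\| \le \|O^i-O^j\|_F\,\|r_\alpha\|$. The paper uses exactly this ad hoc triangle-inequality argument rather than the product-manifold framing you mention as an alternative, but the substance is identical.
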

\begin{proof}
It follows from Propositions \ref{P2.1} and \ref{P2.2} that 
\begin{equation} \label{D-1-1}
\lim_{t \to \infty} \|{\bar x}^i(t)-{\bar x}^j(t)\| = 0,\quad \lim_{t \to \infty}\|O^i(t) -O^j(t) \|_F = 0,
\end{equation}
for all $i,j=1, 2, \cdots, N$. Then it follows from \eqref{dec} and \eqref{D-1-1} that 
\begin{align*}
\begin{aligned}
\|x_\alpha^i-x_\alpha^j \|&=\|( {\bar x}^i+O^ir_\alpha)-( {\bar x}^j+O^jr_\alpha) \|\leq \|{\bar x}^i- {\bar x}^j \|+ \|(O^i-O^j)r_\alpha \|\\
&\leq  \| {\bar x}^i- {\bar x}^j \|+\|O^i-O^j \|_{F}\cdot \|r_\alpha \|\to0\quad\mbox{as}\quad t\to\infty
\end{aligned}
\end{align*}
which yields the desired result. 
\end{proof}

\subsubsection{The first-order dynamics} \label{sec:4.1.4}
We recall the Cauchy problem to the first-order model:
\begin{align}\label{D-2}
\begin{cases}
\displaystyle \dot{{\bar x}}^i = \frac{\kappa}{N}\sum_{k=1}^N({\bar x}^k- {\bar x}^i), \quad r^i_\alpha(t)=O^i(t) r_\alpha, \\
\displaystyle \dot{O}^i(O^i)^T = \frac{\kappa}{2N}\sum_{k=1}^N\big(O^k(O^i)^T-O^i(O^k)^T\big)=0,\\
\displaystyle {\bar x}^i |_{t= 0+} = {\bar x}^{i0} \quad \mbox{and} \quad O^i |_{t= 0+} = O^{i0} \in SO(d), \quad  \forall~i = 1, \cdots, N.
\end{cases}
\end{align}
Note that first-order dynamics \eqref{D-2} can be obtained in a formal limit:
\begin{align*}
\frac{m}{\gamma}\to0,\quad \frac{\kappa}{\gamma} =: \tilde{\kappa}.
\end{align*}
System \eqref{D-2} is a combination of the linear consensus model and the Lohe matrix model. Dynamics of each center of mass follows the linear consensus model, and the each rotational motion follows the Lohe matrix model. So we have the new approaches to Lohe matrix model.

\begin{theorem} \label{T4.2}
\emph{(Complete shape matching)}
Let $\{X^i(t)\}_{i=1}^N=\{\{x_\alpha^i(t)\}_{\alpha\in \Lambda}\}_{i=1}^N$ be a solution of the first-order linear consensus model \eqref{D-2} satisfying the relations:
\[  \kappa > 0, \quad  \max_{i,j} \|O^{i0} - O^{j0} \|_{F} < 1. \]   
Then, one has 
\begin{align*}
\lim_{t \to \infty} \|x_\alpha^i(t) -x_\alpha^j(t) \| = 0,
\end{align*}
for all $\alpha\in \Lambda$ and $i, j\in \{1, 2, \cdots, N\}$.
\end{theorem}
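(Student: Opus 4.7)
The structure of system \eqref{D-2} is the key: the subsystem for the centers of mass $\{\bar{x}^i\}$ and the subsystem for the rotations $\{O^i\}$ are completely decoupled, just as in Theorem \ref{T4.1}. So my plan is to mimic the argument of Theorem \ref{T4.1} but with the first-order (zero-inertia) versions of Propositions \ref{P2.1} and \ref{P2.2} in place of the second-order versions, and then recombine the two pieces through the reconstruction formula \eqref{dec}.

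First, I would invoke Proposition \ref{P2.1}(ii) applied to $q^i = \bar{x}^i$ (with the normalization $\gamma = 1$, which is consistent with the formal limit $m/\gamma \to 0$, $\kappa/\gamma \to \tilde\kappa$ noted just before the theorem). This gives
\[
\|\bar{x}^i(t) - \bar{x}^j(t)\| \lesssim e^{-\tilde\kappa t} \longrightarrow 0, \qquad i,j = 1,\dots,N.
\]
Next, I would invoke Proposition \ref{P2.2}(ii) applied to $A^i = O^i$; the assumption $\max_{i,j}\|O^{i0}-O^{j0}\|_F < 1$ is exactly the initial-data smallness condition in that proposition, so we obtain
\[
\|O^i(t) - O^j(t)\|_F \lesssim e^{-\tilde\kappa t} \longrightarrow 0, \qquad i,j = 1,\dots,N.
\]

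Finally, by the reconstruction relation \eqref{dec}, we have $x_\alpha^i = \bar{x}^i + O^i r_\alpha$, hence
\[
\|x_\alpha^i - x_\alpha^j\| \leq \|\bar{x}^i - \bar{x}^j\| + \|O^i - O^j\|_F \cdot \|r_\alpha\|,
\]
and both terms on the right-hand side tend to zero as $t \to \infty$, uniformly in $\alpha$ (since $\Lambda$ is finite). This yields the desired conclusion.

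There is really no substantive obstacle here once the two propositions are in hand; the only point requiring care is checking that the smallness hypothesis $\max_{i,j}\|O^{i0}-O^{j0}\|_F<1$ is precisely what powers the exponential-decay branch of Proposition \ref{P2.2}, and that no corresponding smallness assumption on the $\bar{x}^{i0}$ is needed because Proposition \ref{P2.1}(ii) is unconditional on initial positions. The argument is essentially the first-order mirror image of Theorem \ref{T4.1}, and the proof should occupy only a few lines.
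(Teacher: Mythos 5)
Your proposal is correct and follows exactly the route the paper intends: the paper's own proof of Theorem \ref{T4.2} consists of the single remark that it is ``exactly the same as in the proof of Theorem \ref{T4.1},'' i.e., apply the zero-inertia case of Proposition \ref{P2.1} to the centers of mass, the first-order case of Proposition \ref{P2.2} (whose smallness hypothesis is precisely $\max_{i,j}\|O^{i0}-O^{j0}\|_F<1$) to the rotations, and conclude via the triangle inequality applied to the reconstruction formula \eqref{dec}. You have simply made explicit the details the paper omits.
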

\begin{proof} The proof is exactly the same as in the proof of Theorem \ref{T4.1}. Hence we omit its details. 
\end{proof}
The rigorous emergent dynamics to \eqref{F-1} and \eqref{F-2}  of similar polytopes and heterogeneous polytopes will not be treated in this paper. 

\section{Numeric simulations} \label{sec:5}
\setcounter{equation}{0}
In this subsection, we provide several numerical examples for the complete shape matchings fo polygons and simplexes. 

\subsection{Congruent triangles} \label{sec:5.1}
In this subsection, we provide several numerical examples with the ensemble of congruent triangles and similar triangles. In Figure \ref{F1}, we provide a series of snapshots at $t= 0, 3, 6$ and $12$. 
\begin{figure}[h]
\centering
\mbox{
\subfigure[~$t=0$]{\includegraphics[scale = 0.45]{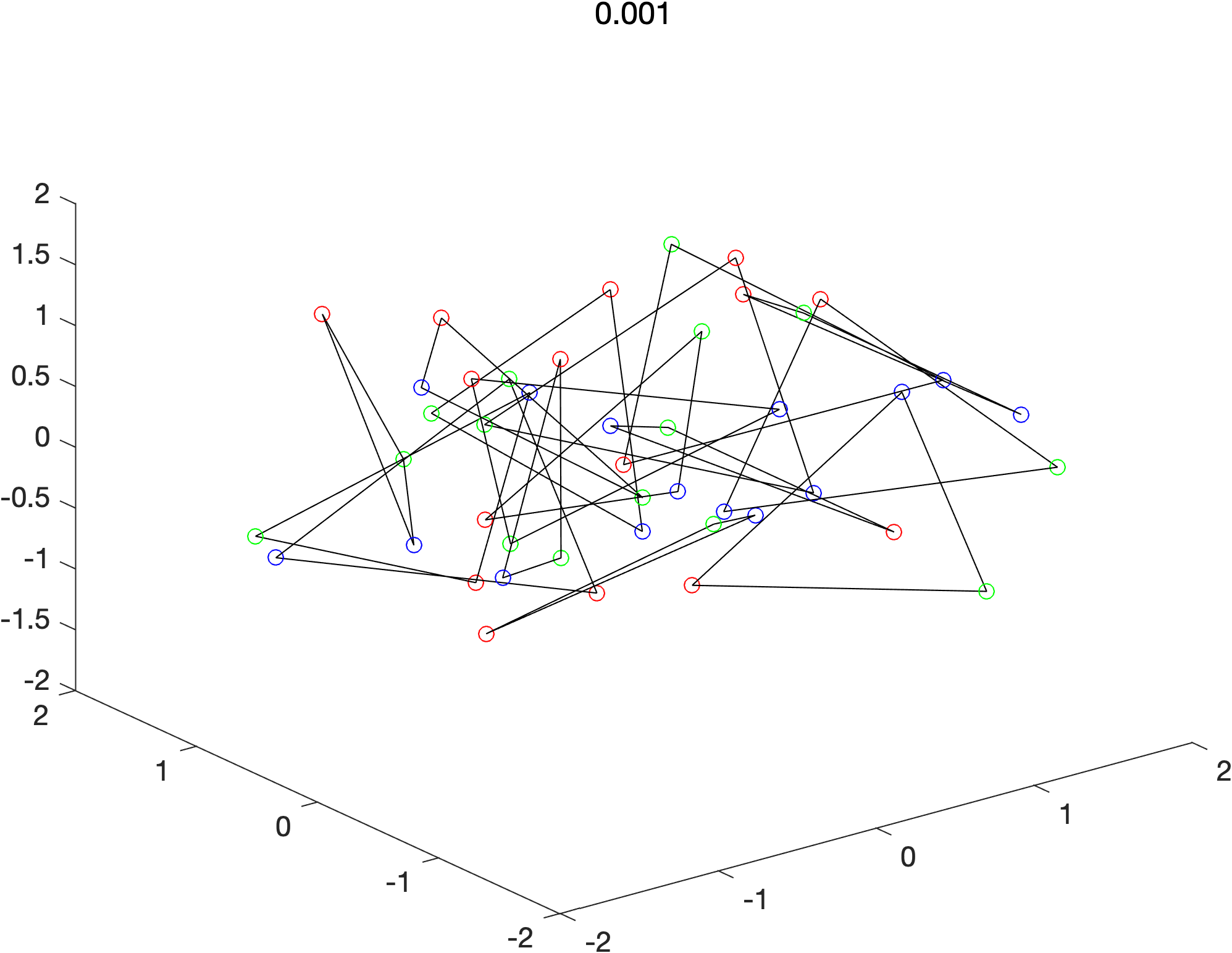}}
\subfigure[~$t=3$]{\includegraphics[scale = 0.45]{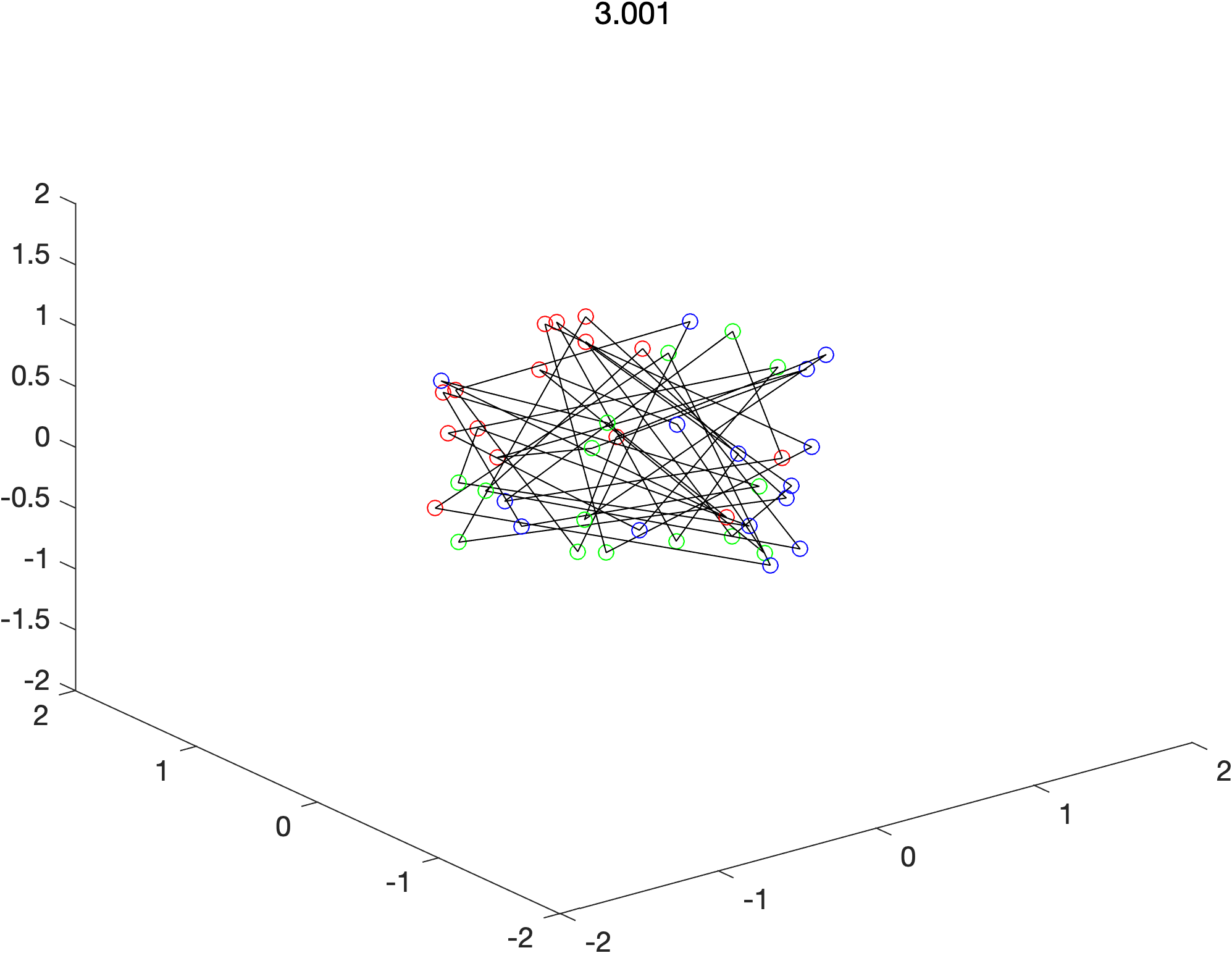}}
}
\end{figure}
\begin{figure}[h]
\centering
\mbox{
\subfigure[~$t=6$]{\includegraphics[scale = 0.45]{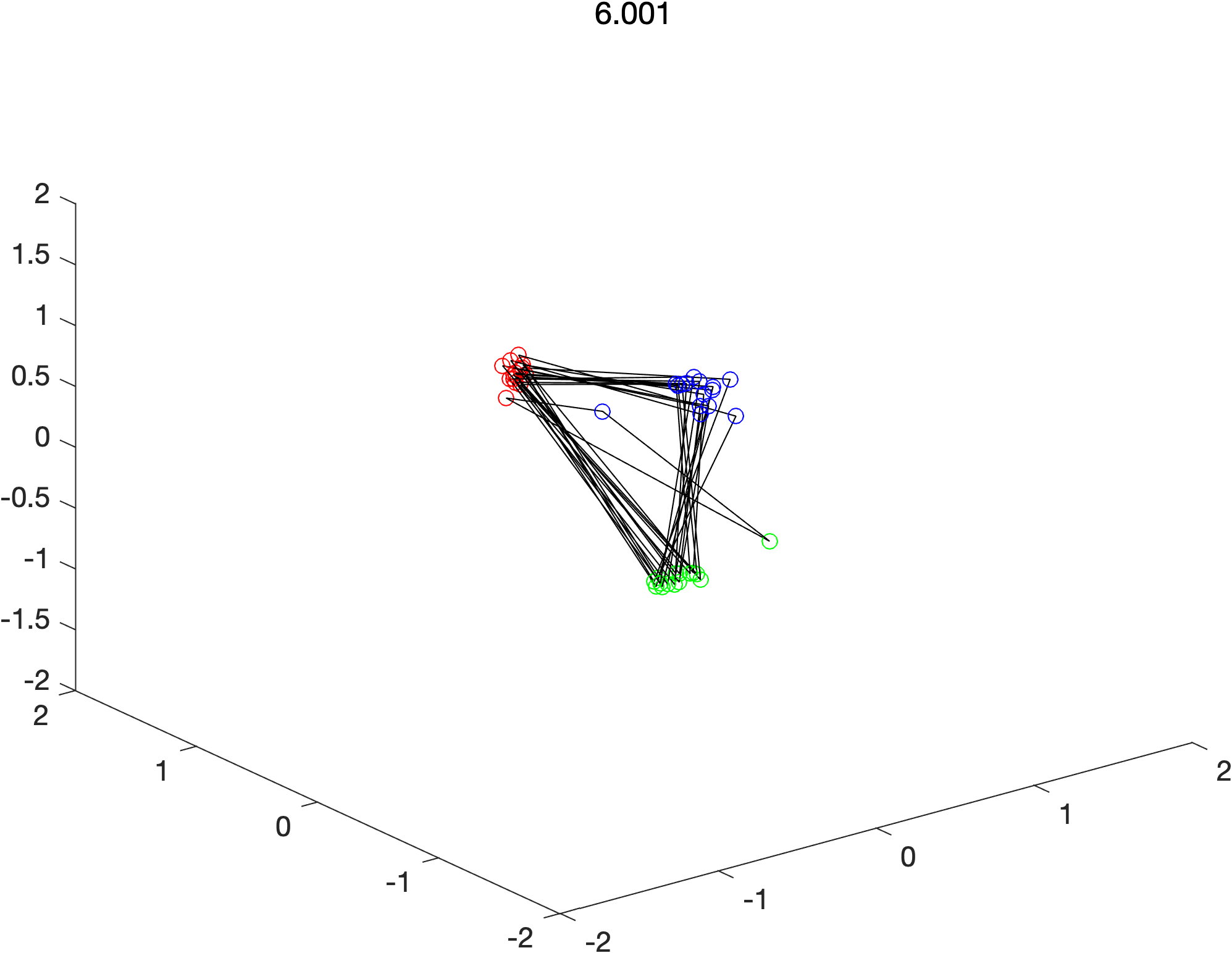}}
\subfigure[~$t=12$]{\includegraphics[scale = 0.45]{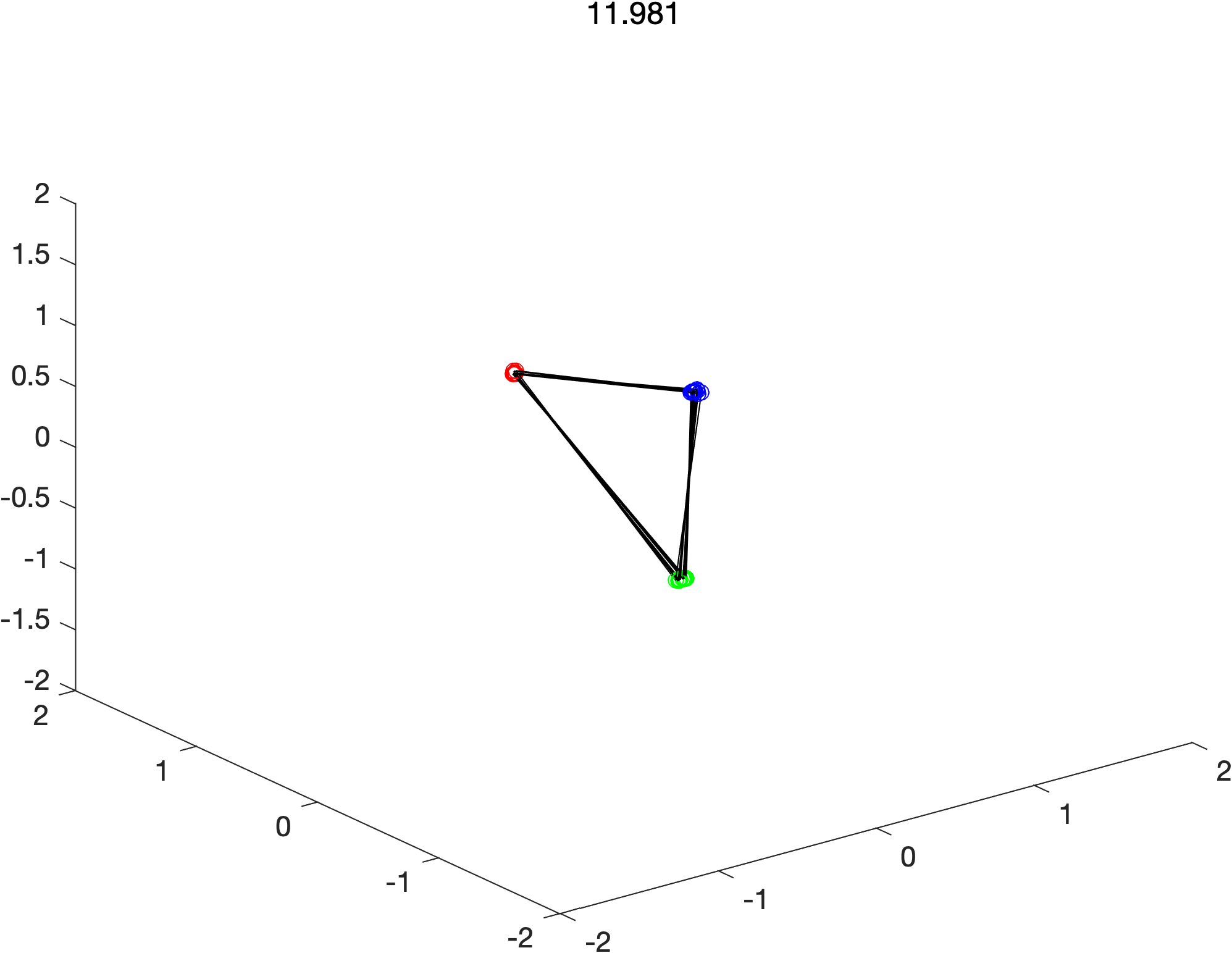}}
}
\caption{Complete shape matching of congruent triangles}
\label{F1}
\end{figure}

%
%
%
%

\vspace{0.5cm}

\subsection{Similar triangles} \label{sec:5.2}
In this subsection, we provide several simulations for the shape matchings of similar polytopes as in Figure \ref{F2} beginning from random initial configuration. 
\begin{figure}[h]
\centering
\mbox{
\subfigure[~$t=0$]{\includegraphics[scale = 0.45]{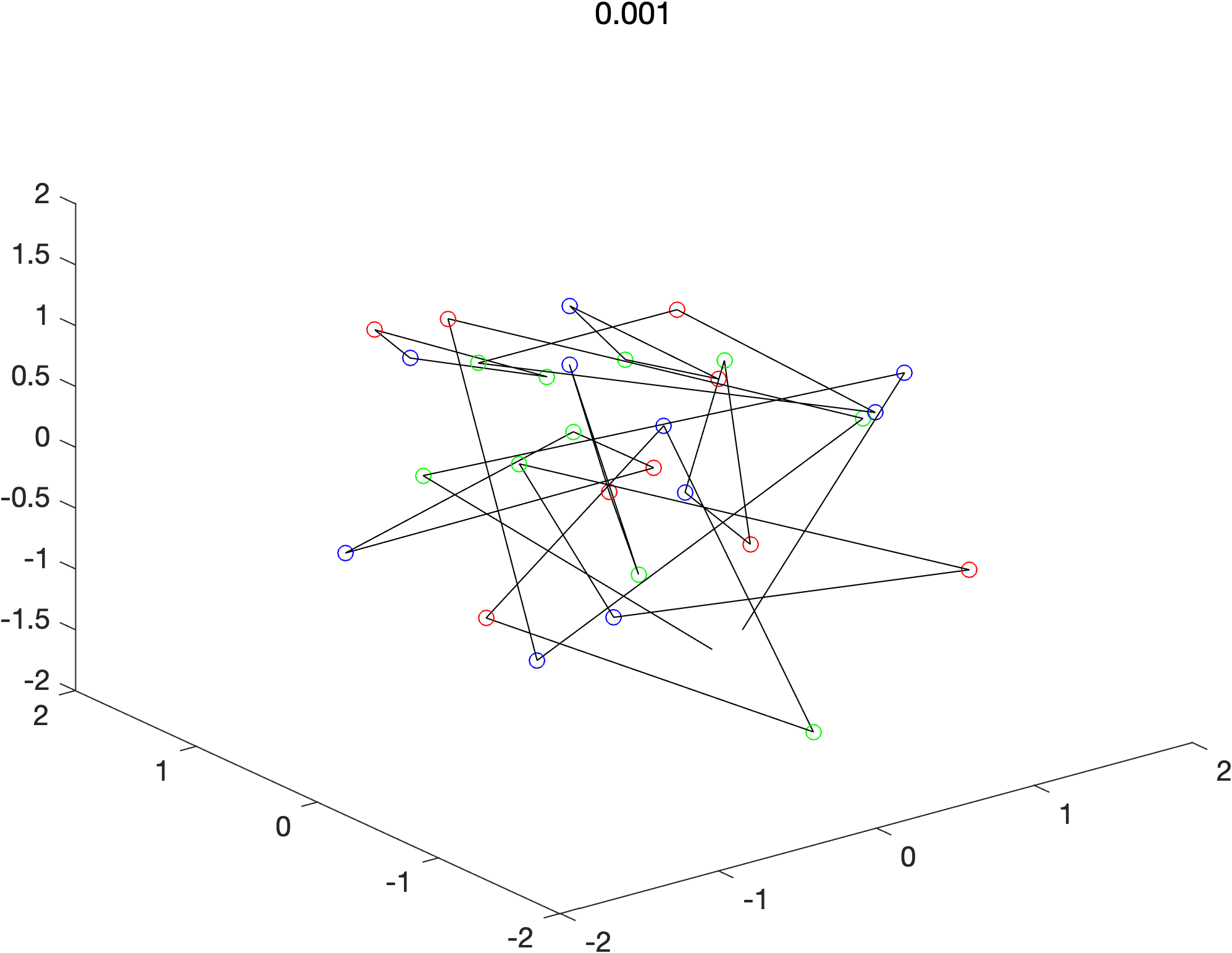}}
\subfigure[~$t=3$]{\includegraphics[scale = 0.45]{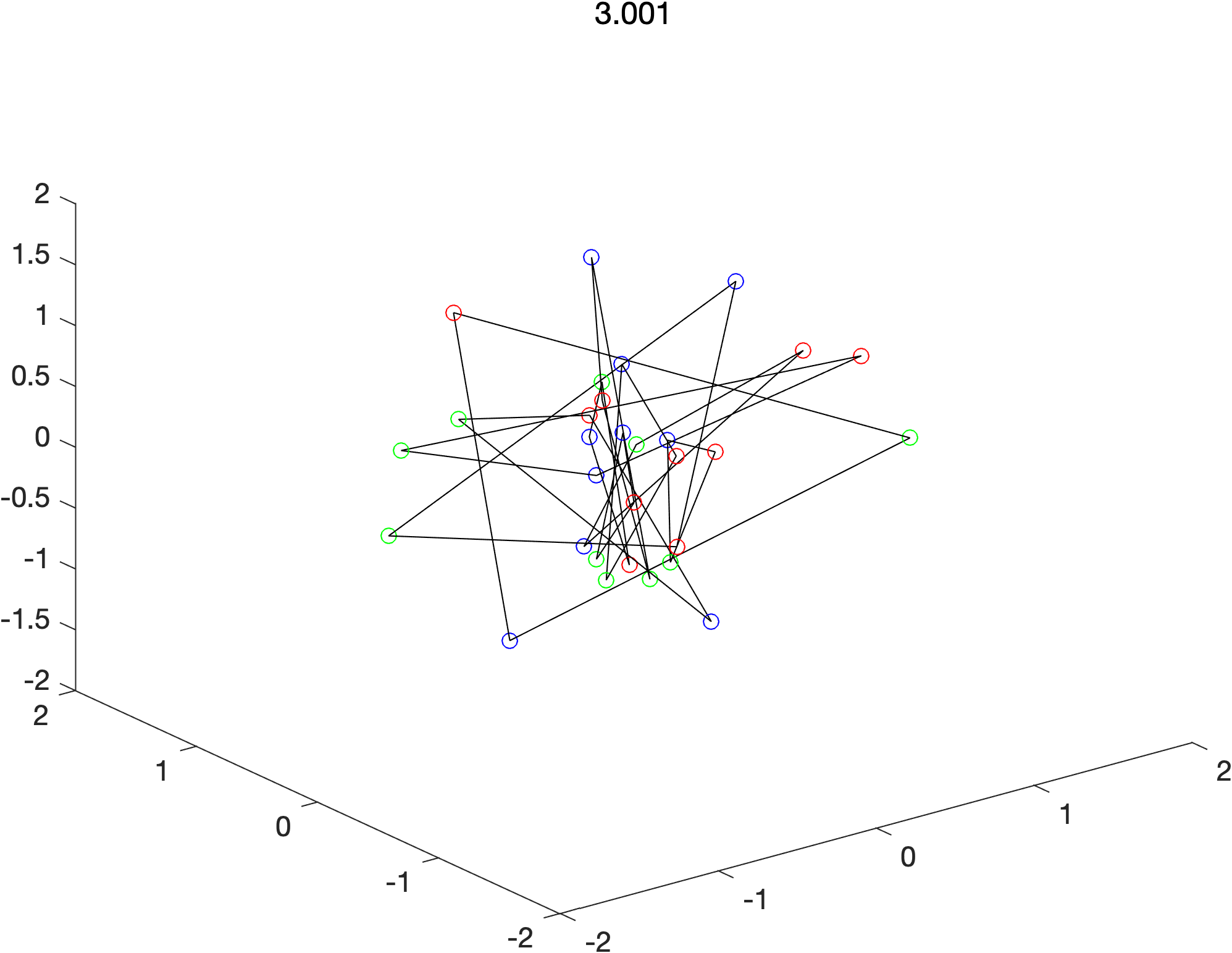}}
}
\end{figure}
\begin{figure}[h]
\centering
\mbox{
\subfigure[~$t=6$]{\includegraphics[scale = 0.45]{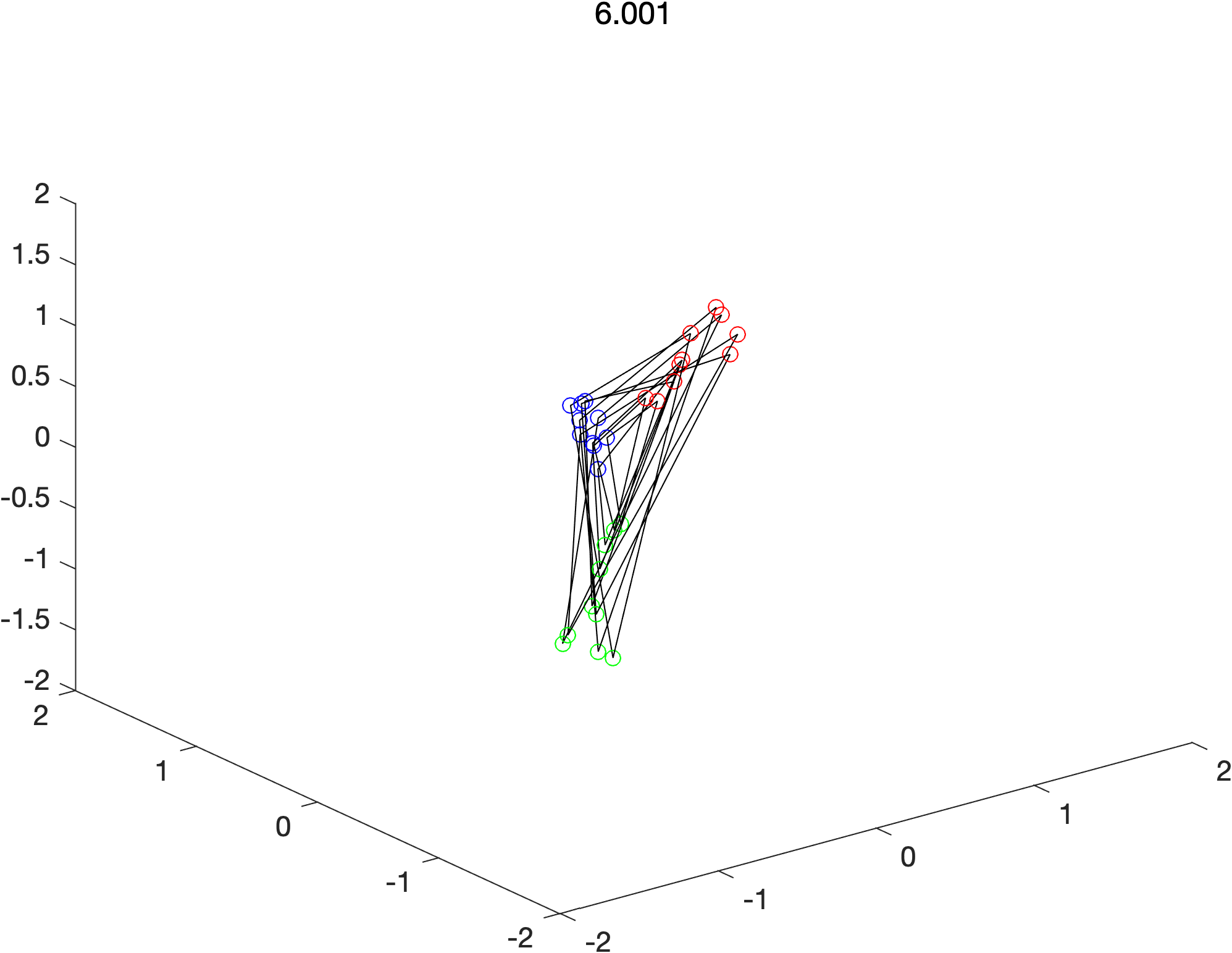}}
\subfigure[~$t=12$]{\includegraphics[scale = 0.45]{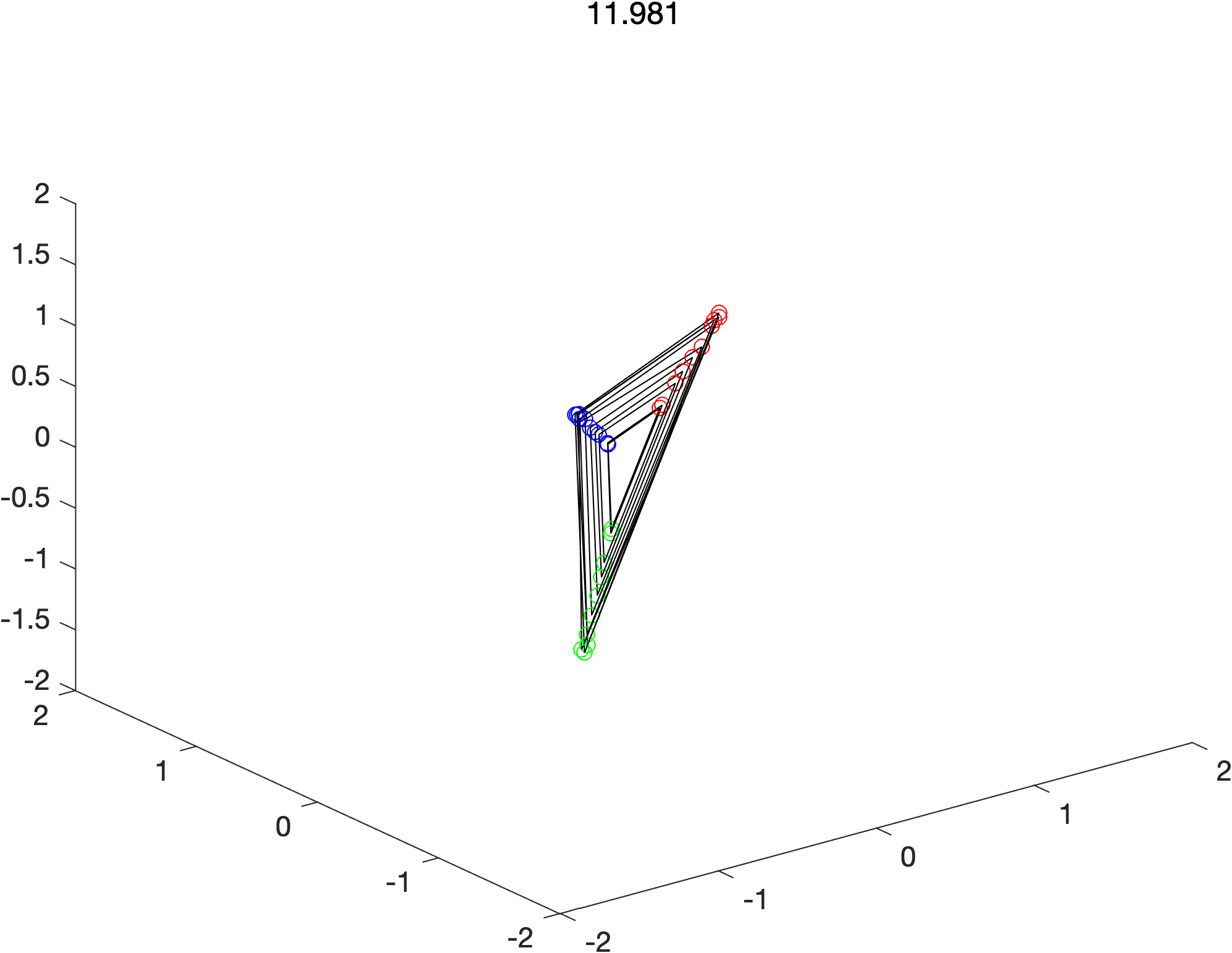}}
}
\caption{Complete shape matchings for similar triangles}
\label{F2}
\end{figure}


\subsection{Heterogeneous polytopes} \label{sec:5.3}
In this subsection, we emergence local shape matchings for the mixed ensemble of congruent triangles and congruent tetrahedrons. Beginning from random initial configuration, we observe how congruent triangles and congruent tetrahedrons are first separated and then each subensemble tends to complete shape matchings. In Figure \ref{F3}, we provide a series of snapshots at $t= 0, 3, 6$ and $12$. 
\begin{figure}[h]
\centering
\mbox{
\subfigure[~$t=0$]{\includegraphics[scale = 0.45]{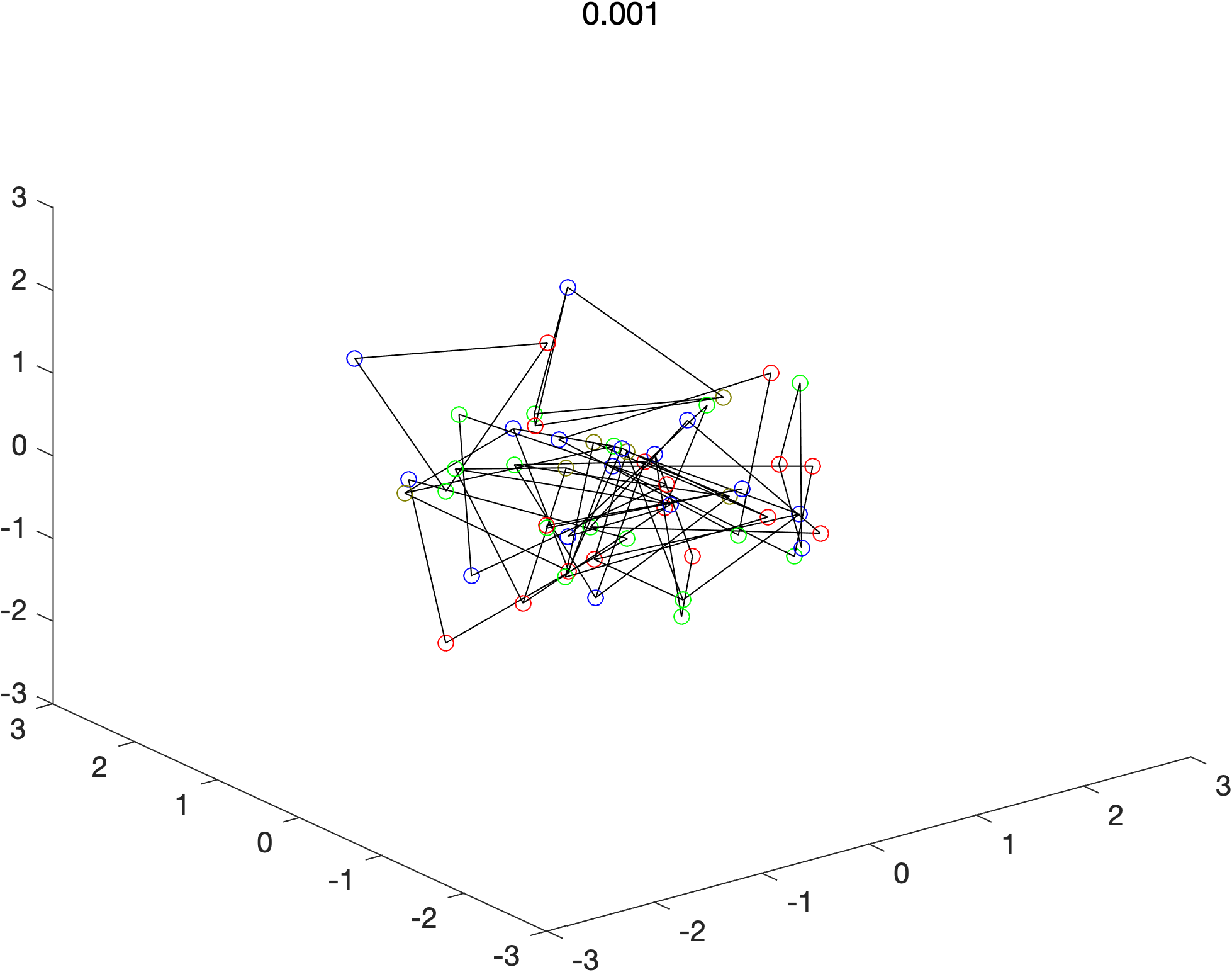}}
\subfigure[~$t=3$]{\includegraphics[scale = 0.45]{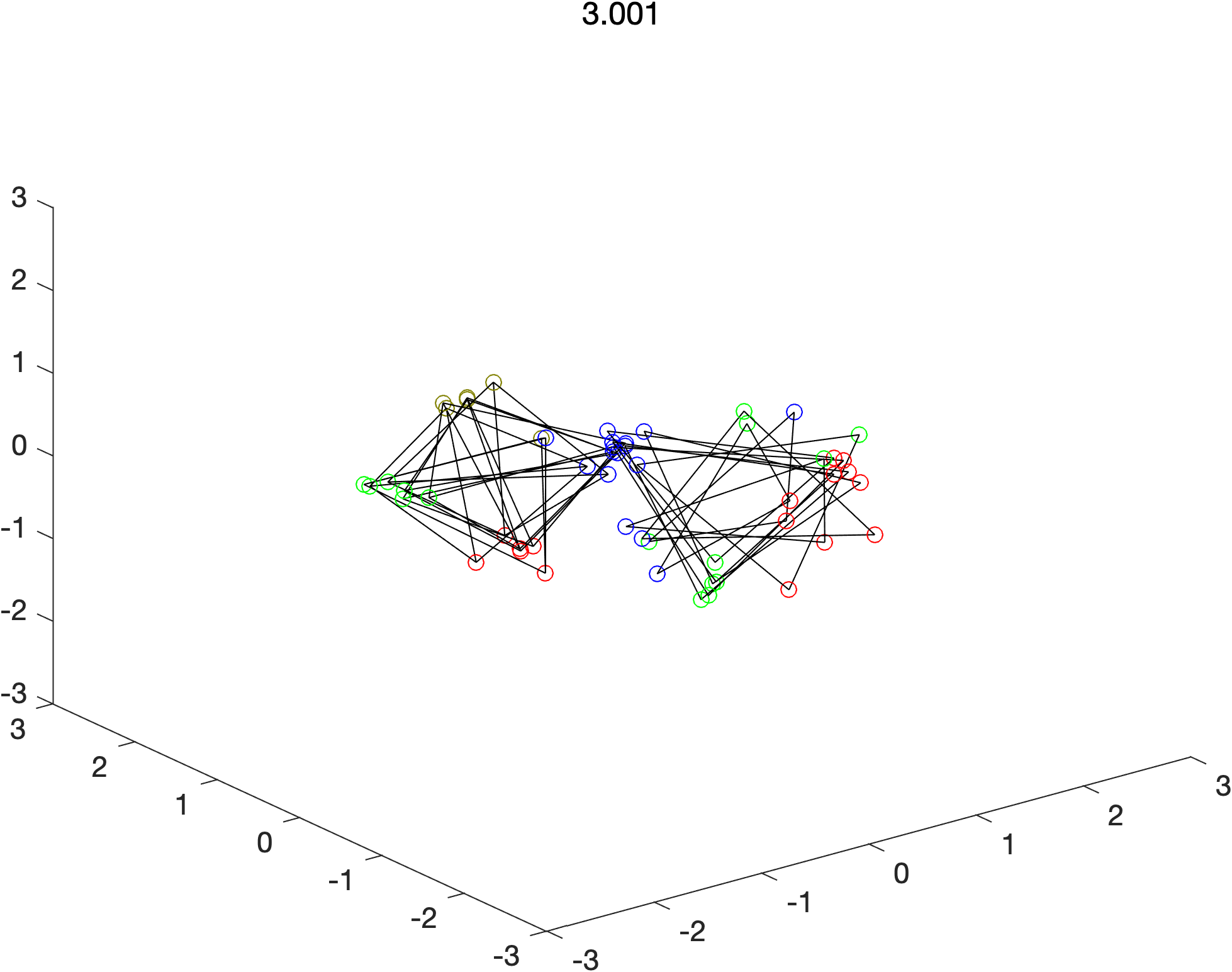}}
}
\end{figure}
\begin{figure}[h]
\centering
\mbox{
\subfigure[~$t=6$]{\includegraphics[scale = 0.45]{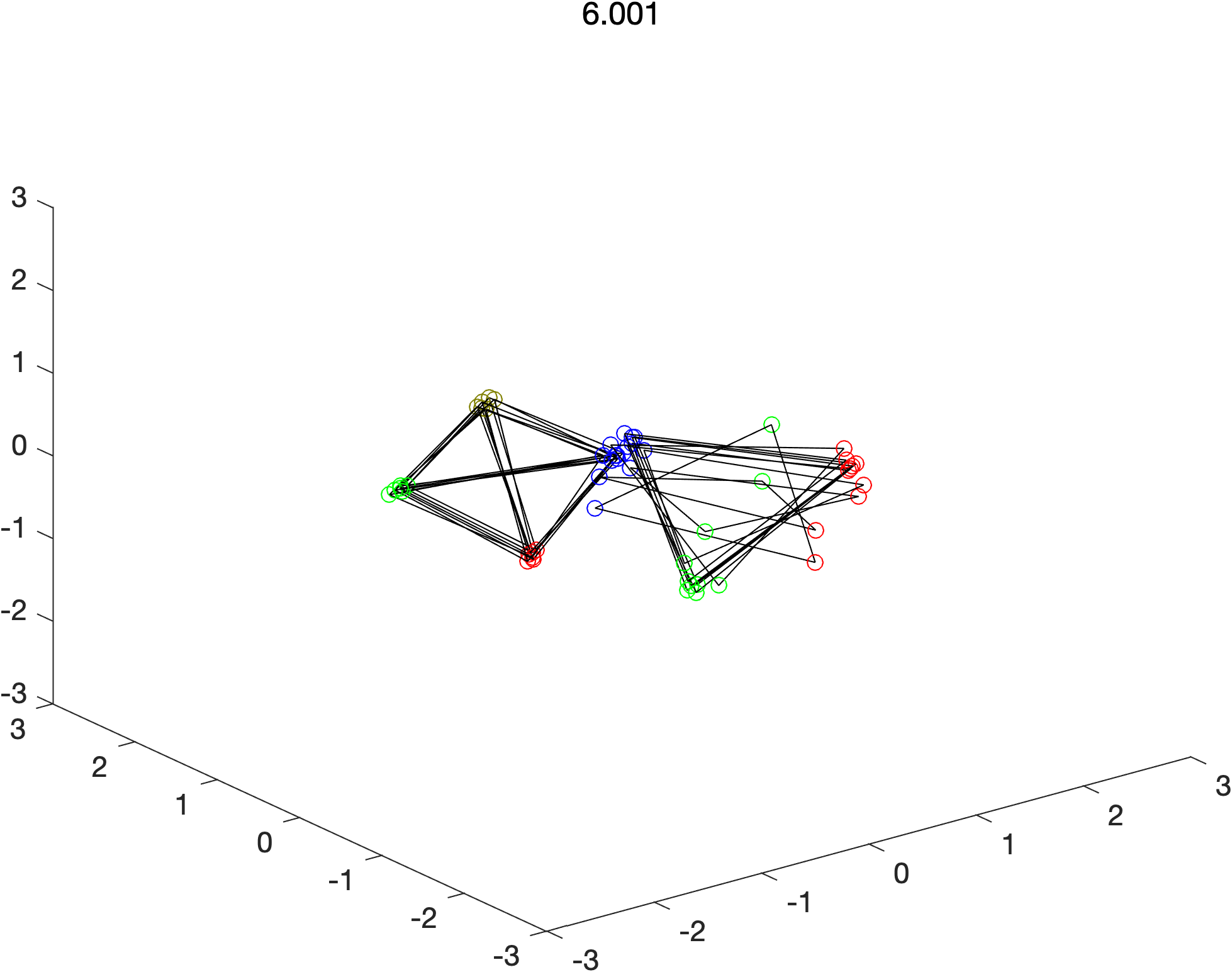}}
\subfigure[~$t=12$]{\includegraphics[scale = 0.45]{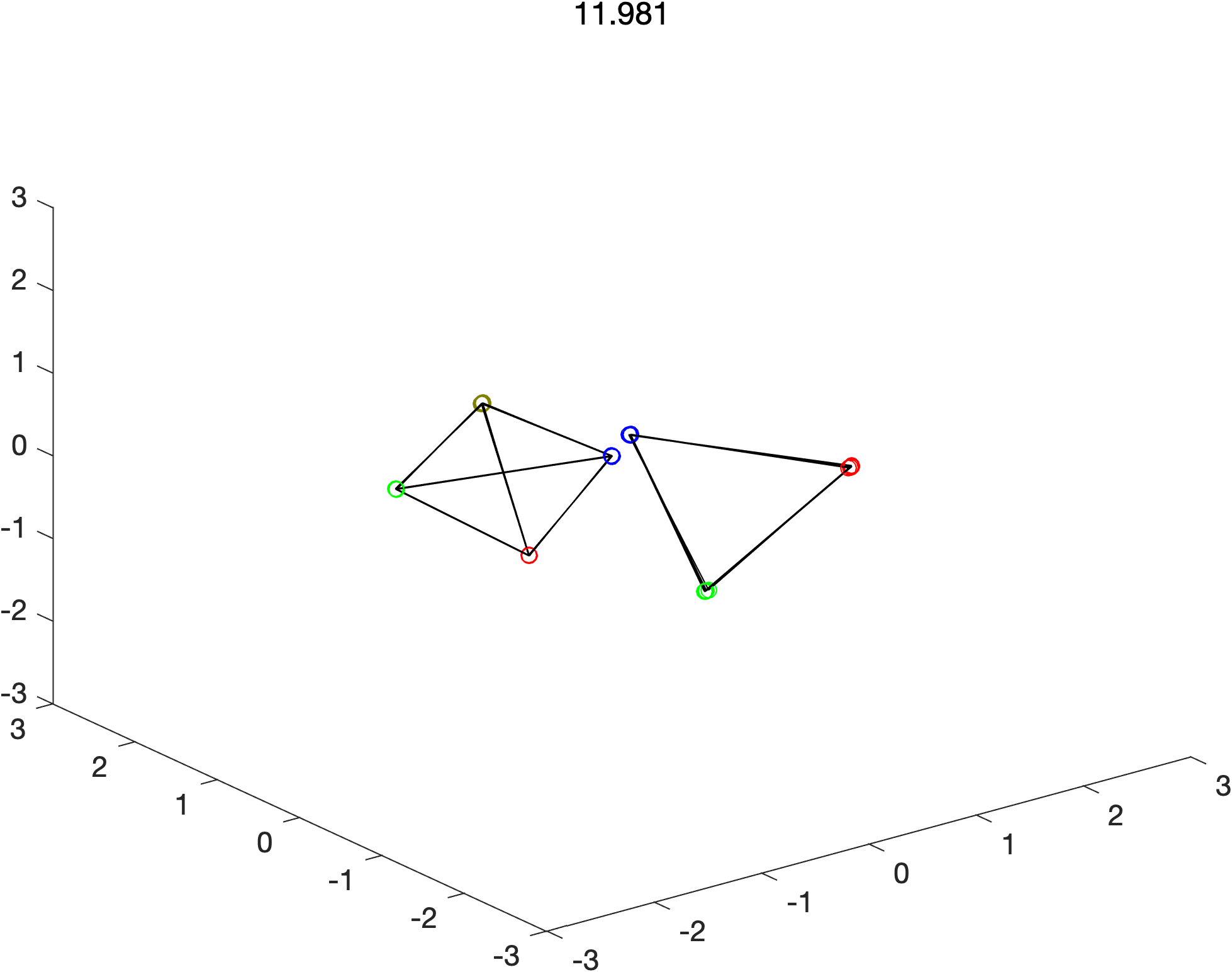}}
}
\caption{Complete shape matching for congruent tetrahedrons and congruent triangles}
\label{F3}
\end{figure}

\section{Conclusion} \label{sec:6}
\setcounter{equation}{0}
In this paper, we have introduced a dynamical systems approach for shape matchings of the ensemble of polytopes along rigid-body motions.
 As aforementioned in Introduction, so far, a dynamical systems approach is available only for the aggregation of point particles so that internal structure does not matter in the dynamics. In this work, we considered an ensemble consisting of polytopes such as polygons and simplexes. The rigid-body motions without refections can be decomposed as a direct sum of translation and rotations i.e., translation and rotations will be represented  by a vector in $\bbr^d$ and a matrix in $SO(d)$. Based on physical argument on the forces acting on the vertex of polytope, we showed that the center-of-mass will follow a system of second-order equation with linear damping and distributed consensus coupling, whereas the rotation matrix follows the Lohe matrix model on $SO(d)^N$. As far as the authors know, this is the first time for the Lohe matrix model to be derived based on a set of  physical principles {\it without imposing} on the system dynamics. For the dynamics of center-of-mass, as long as system parameters are strictly positive, all initial center-of-masses tend to the same point, moreover, rotation matrices will tend to the same rotation matrix exponentially fast, as long as the initial data is sufficiently small. Moreover, due to the gradient flow structure of the Lohe matrix model, all initial configurations tend to a phase-locked state asymptotically. We believe that our work might be useful for a dynamical systems approach for the shape matching arising from computer science community e.g., \cite{A-C-K-Y, A-C, A-S-S, C-L, V-H}. We will leave several interesting bridges between aggregation modeling and shape matching as a possible future work.

\end{document}